\newcommand{\Suffix}{\mathsf{Suffix}}
\newcommand{\Substr}{\mathsf{Substr}}
\newcommand{\STree}{\mathsf{STree}}
\newcommand{\suflink}{\mathsf{slink}}
\newcommand{\rev}[1]{#1^R}
\newcommand{\occ}{\mathsf{occ}}
\newcommand{\LM}{\mathsf{LM}}
\newcommand{\RM}{\mathsf{RM}}
\newcommand{\M}{\mathsf{M}}
\newcommand{\PNF}{\mathsf{NF^+}}
\def\W#1#2{{\mathsf{W\_link}}_{#1}({#2})}
\newtheorem{theorem}{Theorem}
\newtheorem{lemma}{Lemma}
\newtheorem{corollary}{Corollary}
\theoremstyle{definition}
\newtheorem{example}{Example}
\newtheorem{problem}{Problem}
\title{
  Faster and Simpler Online Computation \\ of String Net Frequency
}
\author{Shunsuke Inenaga}
\affil{\textit{Department of Informatics, Kyushu University, Japan} \\
\texttt{\small inenaga.shunsuke.380@m.kyushu-u.ac.jp}
}
\date{}
\begin{document}
\maketitle

\begin{abstract}
  An occurrence of a repeated substring $u$ in a string $S$
  is called a \emph{net occurrence} if extending
  the occurrence to the left or to the right decreases
  the number of occurrences to 1.
  The \emph{net frequency} (\emph{NF}) of a repeated substring $u$ in a string $S$
  is the number of net occurrences of $u$ in $S$.
  Very recently, Guo et al. [SPIRE 2024] proposed an online $O(n \log \sigma)$-time algorithm that maintains a data structure of $O(n)$ space which answers \textsc{Single-NF} queries in $O(m\log \sigma + \sigma^2)$ time
  and reports all answers of the \textsc{All-NF} problem in $O(n\sigma^2)$ time.
  Here, $n$ is the length of the input string $S$,
  $m$ is the query pattern length, and $\sigma$ is the alphabet size.
  The $\sigma^2$ term is a major drawback of their method since
  computing string net frequencies is originally motivated for Chinese language text processing where $\sigma$ can be thousands large.
  This paper presents an improved online $O(n \log \sigma)$-time algorithm,
  which answers \textsc{Single-NF} queries in $O(m \log \sigma)$ time and reports all answers to the \textsc{All-NF} problem in output-optimal $O(|\PNF(S)|)$ time, where $\PNF(S)$ is the set of substrings of $S$ paired with their positive NF values. We note that $|\PNF(S)| = O(n)$ always holds.
  In contrast to Guo et al.'s algorithm that is based on Ukkonen's suffix tree construction, our algorithm is based on Weiner's suffix tree construction.
\end{abstract}


\section{Introduction}

A \emph{repeat} in string $S$ is a substring that occurs at least twice in $S$.
An occurrence $[i..j]$ of a repeat $u$ in $S$ is said to be
a \emph{net occurrence} of $u$ if neither of its left-extension nor right-extension is a repeat, i.e.,
each of $au$, $ub$, and $aub$ occurs exactly once in $S$
where $a = T[i-1]$, $b = T[j+1]$, and $1 < i \leq j < n$.
The number of net occurrences of a repeat $u$ in a string $S$ is said to be the \emph{net frequency} (\emph{NF}) of $u$ in $S$.
Computing the string net frequencies has applications in
Chinese language text processing~\cite{LinY01,LinY04,OhlebuschBO24}.

Efficient computation of string net frequencies has gained recent attention.
Guo, Eades Wirth, and Zobel~\cite{GuoEWZ24}
proposed the first efficient algorithm for computing the string net frequencies
(the \emph{GEWZ} algorithm).
The GEWZ algorithm works in an offline manner,
namely it first reads the whole input string $S$,
and builds a data structure of $O(n)$ space based on
the suffix array~\cite{manber93:_suffix} and the Burrows-Wheeler transform~\cite{BurrowsWheeler94}.
The data structure of the GEWZ algorithm takes $O(n)$ total space,
and can be built in $O(n)$ time 
for integer alphabets of polynomial size in $n$,
or in $O(n \log \sigma)$ time for general ordered alphabets of size $\sigma$.
Then, the GEWZ algorithm answers \textsc{Single-NF} queries
of reporting the net frequency of a given pattern in $S$
in $O(m + \log n + \sigma)$ time each\footnote{Theorem 15 of the GEWZ algorithm~\cite{GuoEWZ24} assumes that a pair $(i,i+m-1)$ of positions for an occurrence $[i..i+m-1]$ of a pattern in text $S$ is given as a query, where $m$ is the pattern length. When a pattern is given as a string, then since their data structure is based on the suffix array, it requires $O(m + \log n)$ time for finding an occurrence.}.
They also considered the \textsc{All-NF} problem of reporting
the net frequencies of all repeats in $S$ that have positive net frequencies,
and showed how their data structure can answer
the \textsc{All-NF} problem in $O(n)$ time.
Ohlebusch, B{\"{u}}chler, and Olbrich~\cite{OhlebuschBO24} proposed
another offline algorithm (the OBO algorithm) for the \textsc{All-NF} problem that works in $O(n)$ total time for integer alphabets of size $\sigma = O(n)$.

Recently, Guo, Umboh, Wirth, and Zobel~\cite{GuoUWZ24} considered the online version of the problem of computing net frequency, where the characters of the input string arrive one by one.
The GUWZ algorithm given in~\cite{GuoUWZ24} uses
the suffix tree~\cite{Weiner} of $O(n)$ space and answers \textsc{Single-NF} queries in $O(m \log \sigma + \sigma^2)$ time\footnote{The alphabet size $\sigma$ is assumed to be a constant in the literature~\cite{GuoEWZ24,GuoUWZ24}.}.
For the online \textsc{All-NF} problem, 
the GUWZ algorithm maintains a data structure for an input online string $S$
that takes $O(n \sigma^2)$ time to report the answers to the \textsc{All-NF} problem.
The GUWZ algorithm
is based on Ukkonen's left-to-right online suffix tree construction algorithm~\cite{Ukkonen95} and uses $O(\log \sigma)$ amortized time for updates per new character.

There are two major drawbacks in the GUWZ algorithm.
The first drawback is its dependency in the alphabet size.
The $\sigma^2$ terms in the $O(m \log \sigma + \sigma^2)$ query time
for the \textsc{Single-NF} problem
and in the $O(n\sigma^2)$ reporting time for the \textsc{All-NF} problem
can be a major bottleneck in Chinese language text processing
in which the alphabet size $\sigma$ is thousands large.
The second drawback is that the GUWZ algorithm heavily relies on the
suffix-extension data structure of Breslauer and Italiano~\cite{BreslauerI12},
which further relies on a nearest marked ancestor (NMA) data structure~\cite{westbrook92:_fast_increm_planar_testin} on dynamic trees and
an order-maintenance data structure~\cite{Tsakalidis84,DietzS87,BenderCDFZ02} on dynamic lists.

In this paper, we resolve both of the aforementioned issues
by presenting an improved algorithm for the
\textsc{Single-NF} and \textsc{All-NF} problems in the online setting (see also Table~\ref{tab:comparisons}).
We present an online algorithm that uses $O(n)$ space
and maintains a list of repeats with positive NF values in $O(\log \sigma)$ amortized time per character, for general ordered alphabets of size $\sigma$.
Since the positive NF values can be stored in the corresponding nodes
in the suffix tree with $O(n)$ space,
our data structure answers \textsc{Single-NF} queries in $O(m \log \sigma)$ time each.
Also, our algorithm maintains a data structure that
can report all answers to the \textsc{All-NF} problem in output optimal time,
independently of the alphabet size $\sigma$.

Our algorithm is based on Weiner's suffix tree construction~\cite{Weiner},
that works in a right-to-left online manner.
For ease of explanations, we use the version of Weiner's algorithm
that uses \emph{soft Weiner links} on the suffix tree,
which are secondary edges of the directed acyclic word graph (DAWG)~\cite{Blumer85} of the reversed string.


Our online NF computation algorithm is built on
deep insights to how Weiner's online suffix tree construction algorithm
maintains structural changes in the suffix tree in the online setting.
Notably, no special data structures or operations are needed
for computing the string NFs online -
all the necessary information is computed by the classical Weiner algorithm
and the string NFs can be obtained almost for free during the online computation of the suffix tree.
We note that our running time $O(n \log \sigma)$ is optimal
as long as we maintain a data structure based on suffix trees
in the online string~\cite{Farach-ColtonFM00}.

Very recently,
Mieno and Inenaga~\cite{MienoI25}
proposed another online NF algorithm that is based on
Ukkonen's left-to-right suffix tree construction~\cite{Ukkonen95}.
While the MI algorithm of~\cite{MienoI25} requires some non-trivial tweaks
for maintaining the \emph{secondary active point}
on top of the usual active point over the online suffix tree,
no such additional mechanism is required in our Weiner-based method.

\begin{table}[htb]
  \centering{
  \label{tab:comparisons}
  \caption{Comparison of algorithms for computing string net frequencies for input string $S$ of length $n$ over an alphabet of size $\sigma$. $\PNF(S)$ denotes the set of substrings of $S$ paired with their positive NF values, where $|\PNF(S)| = O(n)$ always holds. The $\log \sigma$ factors in the construction times of the offline algorithms~\cite{GuoEWZ24,OhlebuschBO24} can be removed for integer alphabets of size $\sigma = n^{O(1)}$.}
\vspace*{2mm}
\begin{tabular}{c||c|c|c|c|c} \hline
  algorithm & construction time & space & \textsc{Single-NF} time & \textsc{All-NF} time & model \\ \hline \hline
  GEWZ~\cite{GuoEWZ24} & $O(n \log \sigma)$ & $O(n)$ & $O(m + \log n + \sigma)$ & $O(n)$ & offline \\ \hline
  OBO~\cite{OhlebuschBO24} & $O(n \log \sigma)$ & $O(n)$ & - & $O(n)$ & offline \\ \hline
  GUWZ~\cite{GuoUWZ24} & $O(n \log \sigma)$ & $O(n)$ & $O(m \log \sigma + \sigma^2)$ & $O(n\sigma^2)$ & online \\ \hline
  Our Algorithm & $O(n \log \sigma)$ & $O(n)$ & $O(m \log \sigma)$ & $O(|\PNF(S)|)$ & online \\ \hline
  MI~\cite{MienoI25} & $O(n \log \sigma)$ & $O(n)$ & $O(m \log \sigma)$ & $O(|\PNF(S)|)$ & online \\ \hline
\end{tabular}
}
\end{table}

\section{Preliminaries}

\subsection{Strings}

Let $\Sigma$ be a general ordered alphabet.
Any element of $\Sigma^*$ is called a \emph{string}.
For any string $S$, let $|S|$ denote its length.
Let $\varepsilon$ be the empty string, namely, $|\varepsilon| = 0$.
Let $\Sigma^+ = \Sigma \setminus \{\varepsilon\}$.
If $S = XYZ$, then $X$, $Y$, and $Z$ are called 
a \emph{prefix}, a \emph{substring}, and a \emph{suffix} of $S$, respectively.
For any $1 \leq i \leq j \leq |S|$,
let $S[i..j]$ denote the substring of $S$ that begins at position $i$
and ends at position $j$ in $S$.
For convenience, let $S[i..j] = \varepsilon$ if $i > j$.
For any $1 \leq i \leq |S|$, let $S[i]$ denote the $i$th character of $S$.
For any string $T$, let
$\Substr(S)$ and $\Suffix(S)$ denote the sets of substrings and suffixes of $S$,
respectively.
For any string $S$, let $\rev{S}$ denote the reversed string of $S$,
i.e., $\rev{S} = S[|S|] \cdots S[1]$.

\subsection{String Net Frequency and Maximal Repeats}

For a string $w$ and a string $S$,
let $\occ_S(w)$ denote the number of occurrences of $w$ in $S$,
i.e. $$\occ_S(w) = |\{i \mid S[i..i+|w|-1] = w\}|.$$

A string $w$ with $\occ_S(w) \geq 2$ is called a \emph{repeat} in $S$.
For a repeat $w \in \Substr(S)$ of a string $S$,
let
\begin{equation}
\Phi_S(w) = \{(\alpha,\beta) \in \Sigma \times \Sigma \mid \occ_S(\alpha w \beta) = \occ_S(\alpha w) = \occ_S(w \beta) = 1\}. \label{eqn:Phi}
\end{equation}
The \emph{net frequency} (\emph{NF}) of $w$ in $S$,
denoted $\phi_S(w),$ is the number of
ordered pairs $(\alpha, \beta) \in \Sigma \times \Sigma$ of characters such that
$\occ_S(\alpha w \beta) = \occ_S(\alpha w) = \occ_S(w \beta) = 1$, i.e.,
\begin{equation}
\phi_S(w) = |\Phi_S(w)|.
\label{eqn:phi}
\end{equation}

A repeat $w$ in a string $S$ is called
a \emph{left-maximal repeat} in $S$
if $w$ is a prefix of $S$,
or there are at least two distinct characters $\alpha, \alpha'$
that immediately precede $w$ in $S$
(i.e. $\alpha w, \alpha'w \in \Substr(S)$).
Similarly,
a repeat $w$ in $S$ is called a \emph{right-maximal repeat} in $S$
if $w$ is a suffix of $S$,
or there are at least two distinct characters $\beta, \beta'$
that immediately follow $w$ in $S$
(i.e. $w \beta, w \beta' \in \Substr(S)$).
Let $\LM(S)$ and $\RM(S)$ denote the sets of
left-maximal and right-maximal repeats in $S$, respectively.

For a left-maximal repeat $u$ in $S$,
a character $\alpha \in \Sigma$ such that $\occ_S(\alpha w) \geq 1$
is called a \emph{left-extension} of $w$ in $S$.
Similarly, for a right-maximal repeat $w$ in $S$,
a character $\beta \in \Sigma$ such that $\occ_S(w \beta) \geq 1$
is called a \emph{right-extension} of $w$ in $S$.

A repeat $u$ is called a \emph{maximal repeat} in $S$
if $w \in \LM(S) \cap \RM(S)$.
Let $\M(S) = \LM(S) \cap \RM(S)$ denote the set of maximal repeats in $S$.

Guo et al.~\cite{GuoEWZ24} showed that
if $\phi_S(w) \geq 1$, then $w \in \RM(S)$.
Below we present a stronger statement.

\begin{lemma} \label{lem:NF_MR}
If $\phi_S(w) \geq 1$, then $w \in \M(S)$. 
\end{lemma}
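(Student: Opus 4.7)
The plan is very direct: I would just check the two maximality conditions in parallel, exploiting the asymmetry between $\occ_S(w) \geq 2$ and the three occurrence counts being $1$ in the definition of $\Phi_S(w)$.

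First I would fix, by the hypothesis $\phi_S(w) \geq 1$, a pair $(\alpha,\beta) \in \Phi_S(w)$, so that
\[
\occ_S(\alpha w \beta) = \occ_S(\alpha w) = \occ_S(w \beta) = 1.
\]
Since $\phi_S$ is only defined for repeats, we also have $\occ_S(w) \geq 2$.

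Next, to obtain $w \in \RM(S)$, I would compare $\occ_S(w) \geq 2$ with $\occ_S(w\beta) = 1$: there must exist an occurrence $[i..j]$ of $w$ in $S$ whose right-extension is not $\beta$. Either $j = |S|$, in which case $w$ is a suffix of $S$, or $S[j+1] = \beta'$ for some $\beta' \neq \beta$, giving two distinct right-extending characters $\beta,\beta'$. Either way, the definition of right-maximality is satisfied, so $w \in \RM(S)$ (this is essentially the argument already used by Guo et al.~\cite{GuoEWZ24}).

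The key new observation is that the argument is perfectly symmetric on the left. Applying exactly the same reasoning to $\occ_S(w) \geq 2$ versus $\occ_S(\alpha w) = 1$ yields an occurrence of $w$ in $S$ that is either the prefix of $S$ or is preceded by some $\alpha' \neq \alpha$, which together with the occurrence preceded by $\alpha$ witnesses $w \in \LM(S)$. Combining the two conclusions gives $w \in \LM(S) \cap \RM(S) = \M(S)$, as claimed. There is no real obstacle here; the only thing to be a bit careful about is to note that $\phi_S(w) \geq 1$ already presupposes $w$ is a repeat, so the hypothesis $\occ_S(w) \geq 2$ is available to drive both the left- and right-maximality arguments.
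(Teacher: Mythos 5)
Your proof is correct and follows essentially the same route as the paper: both arguments fix a pair $(\alpha,\beta) \in \Phi_S(w)$ and play $\occ_S(w) \geq 2$ against $\occ_S(\alpha w) = 1$ and $\occ_S(w\beta) = 1$ to exhibit a second left-extending (resp.\ right-extending) occurrence or a boundary occurrence, yielding $w \in \LM(S) \cap \RM(S) = \M(S)$.
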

\begin{proof}
  Since $\phi_S(w) \geq 1$,
  there exists a pair $(\alpha,\beta)$ of characters
  such that $\occ_S(\alpha w) = \occ_S(w \beta) = 1$.
  Since $\occ_S(w) \geq 2$,
  there is an occurrence $w = S[i..i+|w|-1]$ of $w$
  such that $S[i-1] \neq \alpha$ or $i = 1$.
  Similarly, there is an occurrence $w = S[j-|w|+1..j]$ of $w$
  such that $S[j+1] \neq \beta$ or $j = |S|$.
  Thus $w \in \LM(S) \cap \RM(S) = \M(S)$.
\end{proof}


Let
$$\PNF(S) = \{(w, \phi_S(w)) \mid w \in \M(S), \phi_S(w) \geq 1\}$$
denote the set of 
pairs of string $w$ and its NF that is positive in $S$.
It follows from Lemma~\ref{lem:NF_MR} that if $(w, \phi_S(w)) \in \PNF(S)$,
then $w \in \M(S)$.
Since $|\M(S)| \leq n-1$ for any string of length $n$ (c.f.~\cite{Blumer87}),
it is immediate that $|\PNF(S)| \leq n-1$.

\subsection{Online Computation of String Net Frequency} \label{sec:online_NF_def}

We will design an online algorithm for computing
substrings that have positive values of net frequencies,
in which new characters are \emph{prepended} to the input string.
Namely, our algorithm works in a right-to-left online manner.
Note that $\phi_S(w)$ is closed under reversal, i.e.,
$\phi_S(w) = \phi_{\rev{S}}(\rev{w})$. 
Hence, by reversing the input string,
our algorithm can be used in the left-to-right online setting as well.

We consider the following problem of maintaining
the list $\mathcal{L}$ that stores $\PNF(S)$ for an online string $S$:


\begin{problem}[Right-to-Left Online NF Computation] \label{prob:online_NF_right_to_left}
  Let $S$ be a string of length $n$
  for which $\PNF(S)$ has been computed.
  Given a character $a \in \Sigma$,
  update $\mathcal{L}$ by computing
  $\PNF(aS) \Delta \PNF(S) = \left(\PNF(aS) \setminus \PNF(S)\right) \cup \left(\PNF(S) \setminus \PNF(aS)\right)$.
\end{problem}


Problem~\ref{prob:online_NF_right_to_left} 
requires one to consider the following cases
each time a new character $a \in \Sigma$ is prepended to the current string $S$:
\begin{description}
 \item[Case (i)] Increase the NFs of some existing maximal repeats: For each $w \in \M(aS) \cap \M(S)$
  such that $(w, \phi_{S}(w)) \in \PNF(S)$ and $\phi_{aS}(w) > \phi_{S}(w) \geq 1$,
  replace $(w, \phi_S(w))$ with $(w, \phi_{aS}(w))$ in the list $\mathcal{L}$.
  Also, for each $w \in \M(aS) \cap \M(S)$
  such that $(w, \phi_{S}(w)) \notin \PNF(S)$ and $\phi_{aS}(w) > \phi_{S}(w) = 0$,
  add $(w, \phi_{aS}(w))$ in the list $\mathcal{L}$.
 \item[Case (ii)] Add new maximal repeats whose NFs become positive: For each $w \in \M(aS) \setminus \M(S)$ such that $(w, \phi_S(w)) \notin \PNF(S)$ and $\phi_{aS}(w) > \phi_S(w) = 0$,
  add $(w, \phi_{aS}(w))$ to the list $\mathcal{L}$.
 \item[Case (iii)]
 Decrease the NFs of some existing maximal repeats: For each $w \in \M(aS) \cap \M(S)$
 such that $(w, \phi_{S}(w)) \in \PNF(S)$ and $\phi_{aS}(w) < \phi_{S}(w)$,
 replace $(w, \phi_S(w))$ with $(w, \phi_{aS}(w))$ in the list $\mathcal{L}$
 if $\phi_{aS}(w) \geq 1$,
 and remove $(w, \phi_S(w))$ from $\mathcal{L}$
 if $\phi_{aS}(w) = 0$.
\end{description}

\begin{example}
  Consider string $S = \mathtt{ababbababcababbb\$}$
  of which the set of maximal repeats is
  $\M(S) = \{\mathtt{b}, \mathtt{ab}, \mathtt{bb}, \mathtt{bab}, \mathtt{abab}, \mathtt{ababb}\}$.
  We have
  \begin{itemize}
  \item $\Phi_{S}(\mathtt{bb}) = \{(\underline{\mathtt{b}}, \underline{\mathtt{\$}})\}$ and thus $\phi_{S}(\mathtt{bb}) = 1$, because $\occ_S(\mathtt{bb}) = 3$ and $\occ_S(\mathtt{\underline{b}bb}) = \occ_S(\mathtt{bb\underline{\$}}) = \occ_{S}(\mathtt{\underline{b}bb\underline{\$}}) = 1$;

  \item $\Phi_{S}(\mathtt{bab}) = \{(\mathtt{\underline{b}}, \mathtt{\underline{a}})\}$ and thus $\phi_S(\mathtt{bab}) = 1$, because $\occ_S(\mathtt{bab}) = 4$ and $\occ_S(\mathtt{\underline{b}bab}) = \occ_S(\mathtt{bab\underline{a}}) = \occ_{S}(\mathtt{\underline{b}bab\underline{a}}) = 1$;
    
  \item $\Phi_{S}(\mathtt{abab}) = \{(\mathtt{\underline{b}}, \mathtt{\underline{c}})\}$ and thus $\phi_S(\mathtt{abab}) = 1$, because $\occ_S(\mathtt{abab}) = 3$ and $\occ_S(\mathtt{\underline{b}abab}) = \occ_S(\mathtt{abab\underline{c}}) = \occ_{S}(\mathtt{\underline{b}abab\underline{c}}) = 1$;
  \item $\Phi_S(\mathtt{ababb}) = \{(\mathtt{\underline{c}}, \mathtt{\underline{b}})\}$ and thus $\phi_{S}(\mathtt{ababb}) = 1$, because $\occ_S(\mathtt{ababb}) = 2$ and $\occ_{S}(\mathtt{\underline{c}ababb}) = \occ_S(\mathtt{ababb\underline{b}}) = \occ_{S}(\mathtt{\underline{c}ababb\underline{b}}) = 1$.
  \end{itemize}
  Since the other maximal repeats in $S$ do not have positive net frequencies,
  the list $\mathcal{L}$ stores $\PNF(S) = \{(\mathtt{bb}, 1), (\mathtt{bab}, 1), (\mathtt{abab}, 1), (\mathtt{ababb}, 1)\}$.

  Consider string $\mathtt{b}S = \mathtt{bababbababcababbb\$}$ of which the set of maximal repeats is
  $\M(\mathtt{b}S) = \{\mathtt{b}, \mathtt{ab}, \mathtt{bb}, \mathtt{bab}, \mathtt{abab}, \mathtt{ababb}, \mathtt{babab}\}$.
  For the new maximal repeat $\mathtt{babab} \in \M(\mathtt{b}S) \setminus \M(S)$ we have
  \begin{itemize}
  \item $\Phi_{\mathtt{b}S}(\mathtt{babab}) = \{(\mathtt{\underline{b}, \mathtt{\underline{c}}})\}$ and thus $\phi_{\mathtt{b}S}(\mathtt{babab}) = 1$, because $\occ_{\mathtt{b}S}(\mathtt{babab}) = 2$ and $\occ_{\mathtt{b}S}(\mathtt{\underline{b}babab}) = \occ_{\mathtt{b}S}(\mathtt{babab\underline{c}}) = \occ_{\mathtt{b}S}(\mathtt{\underline{b}babab\underline{c}}) = 1$. This is in Case (ii).
  \end{itemize}
  For the existing maximal repeats in $\M(S) \cap \M(\mathtt{b}S)$, we have
  \begin{itemize}
  \item $\Phi_{\mathtt{b}S}(\mathtt{ababb}) = \{(\mathtt{\underline{b}}, \mathtt{\underline{a}}), (\mathtt{c}, \mathtt{b})\}$ and thus $\phi_{\mathtt{b}S}(\mathtt{abab}) = 2$, because $\occ_{\mathtt{b}S}(\mathtt{ababb}) = 2$, $\occ_{\mathtt{b}S}(\mathtt{\underline{b}ababb}) = \occ_{\mathtt{b}S}(\mathtt{ababb\underline{a}}) = \occ_{\mathtt{b}S}(\mathtt{\underline{b}ababb\underline{a}}) = 1$, and $\occ_{\mathtt{b}S}(\mathtt{cababb}) = \occ_{\mathtt{b}S}(\mathtt{ababbb}) = \occ_{\mathtt{b}S}(\mathtt{cababbb}) = 1$. This is in Case (i);
  \item $\Phi_{\mathtt{b}S}(\mathtt{abab}) = \emptyset$ and thus $\phi_{\mathtt{b}S} (\mathtt{abab})= 0$, because $\occ_{\mathtt{b}S}(\mathtt{abab}) = 3$ and $\occ_{\mathtt{b}S}(\mathtt{abab\underline{c}}) = \occ_{\mathtt{b}S}(\mathtt{\underline{b}abab\underline{c}}) = 1$ but $\occ_{\mathtt{b}S}(\mathtt{\underline{b}abab}) = 2$. This is in Case (iii);
  \item $\Phi_{\mathtt{b}S}(\mathtt{bab}) = \emptyset$ and thus $\phi_{\mathtt{b}S}(\mathtt{bab}) = 0$, because $\occ_{\mathtt{b}S}(\mathtt{bab}) = 4$ and $\occ_{\mathtt{b}S}(\mathtt{\underline{b}bab}) = \occ_{\mathtt{b}S}(\mathtt{\underline{b}bab\underline{a}}) = 1$ but $\occ_{\mathtt{b}S}(\mathtt{bab\underline{a}}) = 2$. This is in Case (iii).
  \end{itemize}
  Since there are no other changes, the elements in the list $\mathcal{L}$ are updated to $\PNF(\mathtt{b}S) = \{(\mathtt{bb}, 1), (\mathtt{ababb}, 2), (\mathtt{babab}, 1)\}$.
\end{example}

\subsection{Suffix Trees and Weiner Links}

For convenience,
we assume that each string $S$ terminates with a unique character $\$$ that does not appear elsewhere in $S$.

A compacted trie is a rooted tree such that
\begin{itemize}
\item each edge is labeled by a non-empty string,
\item each internal node is branching, and
\item the string labels of the out-going edges of each node begin with mutually distinct characters.
\end{itemize}
The \emph{suffix tree}~\cite{Weiner} 
for a string $S$, denoted $\STree(S)$,
is a compacted trie which represents $\Suffix(S)$.
We sometimes identify node $v$ with the substring it represents.


For any string $S$ of length $n$ that terminates with $\$$,
$\STree(S)$ has exactly $n$ leaves,
and has at most $n-1$ internal nodes (including the root)
since every non-leaf node is branching.
By representing each edge label $x$ with an ordered pair $\langle i, j \rangle$
of integers such that $x = S[i..j]$,
$\STree(S)$ can be stored in $O(n)$ space.

We define the \emph{suffix link} of each non-root node $av$
of $\STree(S)$ with $a \in \Sigma$ and $v \in \Sigma^*$,
by $\suflink(av) = v$.
For each node $v$ and $a \in \Sigma$,
we also define the reversed suffix link (a.k.a. \emph{Weiner link})
by $\W{a}{v} = avy$, where $y \in \Sigma^*$ is the shortest string
such that $avy$ is a node of $\STree(S)$.
$\W{a}{v}$ is undefined if $av \notin \Substr(S)$.
A Weiner link $\W{a}{v} = avy$ is said to be \emph{hard}
if $y = \varepsilon$, and \emph{soft} if $y \in \Sigma^+$.


It is clear that every internal node in $\STree(S)$
represents a right-maximal repeat.
Hence, the first characters in the out-edges of each node $u$
are right-extensions of the corresponding right-maximal repeat $u$ in $S$.
Also, for each internal node $u$ that is also a left-maximal repeat (i.e., $u \in \M(S)$),
the characters $a\in \Sigma$, for which $\W{a}{u}$ are defined, are the left-extensions of $u$ in $S$.

\begin{lemma}[\cite{Blumer85}] \label{lem:DAWG_linear}
For any string $S$ of length $n > 2$,
the total number of hard and soft Weiner links in $\STree(S)$ is at most $3n-4$.
\end{lemma}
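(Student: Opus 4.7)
The strategy is to reduce the claim to the classical bound on the number of transitions of the DAWG, which is the viewpoint the paper already hints at when it calls soft Weiner links "secondary edges of the DAWG of the reversed string.'' Concretely, I would show that the Weiner links of $\STree(S)$ are in bijection with the transitions of $\DAWG(\rev{S})$, with hard Weiner links matching the \emph{primary} transitions and soft Weiner links matching the \emph{secondary} transitions, and then quote the known $3n-4$ bound on the total number of DAWG transitions.

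The first step is to set up the correspondence on states. The nodes of $\STree(S)$ are in bijection with the equivalence classes of substrings of $S$ under "same set of starting positions,'' and these are exactly the endpos-classes of substrings of $\rev{S}$, i.e., the states of $\DAWG(\rev{S})$. Under this identification, the Weiner link $\W{a}{v} = avy$ in $\STree(S)$ corresponds to the transition from the state of $v$ on character $a$ in $\DAWG(\rev{S})$, landing in the state of $avy$. A Weiner link is hard (i.e., $y = \varepsilon$) precisely when the target state is the canonical one-character right-extension class of $\rev{v}$ by $a$ in $\rev{S}$, which is the defining property of a primary DAWG transition; soft Weiner links are exactly the secondary transitions.

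The second step is to invoke the classical theorem of Blumer et al.~\cite{Blumer85}: for any string $T$ with $|T| = n > 2$, $\DAWG(T)$ has at most $3n-4$ transitions. The usual argument partitions the transitions into primary and secondary: primary transitions form a spanning tree of the DAWG rooted at the initial state, hence number at most $(\text{number of states}) - 1 \leq 2n - 2$; and secondary transitions are bounded by $n-2$ via an injection that sends each secondary transition to a distinct proper suffix of $T$, obtained by first taking the transition and then following primary edges to a sink. Applying this with $T = \rev{S}$ and combining with the bijection above gives the stated $3n-4$ bound on Weiner links in $\STree(S)$.

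The main obstacle is formalising the state bijection cleanly, especially reconciling the conventions at the root of $\STree(S)$ with the initial/sink states of $\DAWG(\rev{S})$, and proving rigorously that the hard/soft dichotomy on the suffix-tree side matches the primary/secondary dichotomy on the DAWG side. Once that bookkeeping is carried out, the bound follows with no further combinatorial work.
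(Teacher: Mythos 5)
Your proposal is correct and is exactly the justification the paper intends: the lemma is quoted from Blumer et al., and your reduction of Weiner links in $\STree(S)$ to the transitions of $\DAWG(\rev{S})$ (hard $=$ primary, soft $=$ secondary), followed by the classical $3n-4$ bound on DAWG transitions for $n>2$, is the standard argument behind that citation. No further comment is needed.
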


Throughout the description of our algorithm in Section~\ref{sec:algorithm},
we will assume that the input string $S$ terminates with a unique end-marker $\$$.
We note that this assumption only gives little impact in our NF computation.
By our key lemma (Lemma~\ref{lem:unique_pair}) from the next section,
we have that there can exist at most one character $\alpha$
such that $(\alpha, \$) \in \Phi_{S}(w)$ for a repeat $w$ in $S$.
Moreover, there can exist at most one repeat $w$
that can have $\$$ in the second element
of the pair $(\alpha, \$) \in \Phi_{S}(w)$:
Let $w$ be the longest repeated suffix of $S[1..n-1]$ where $n = |S|$,
and let $\alpha w$ be the shortest non-repeated suffix of $S[1..n-1]$
where $\alpha \in \Sigma$.
Since $S[n] = \$$,
we have $\occ_{S[1..n-1]}(w) = \occ_{S}(w) > 1$ and
$\occ_{S[1..n-1]}(\alpha w) = \occ_{S}(\alpha w) = 1$.
Since $\occ_{S}(w \$) = \occ_{S}(\alpha w \$) = 1$,
$(\alpha, \$) \in \Phi_{S}(w)$ holds.
Note that no other suffixes of $S[1..n-1]$ can satisfy
these properties.

\section{Algorithm} \label{sec:algorithm}

Let $S$ be the input string.
As in the previous work~\cite{GuoUWZ24},
a node $w$ in $\STree(S)$ stores the NF value $\phi_S(w)$ iff $\phi_S(w) \geq 1$.

\subsection{Key Lemma}

Our algorithm is based on the following lemma:

\begin{lemma}[Unique Character Pairs] \label{lem:unique_pair}
  Let $w$ be a repeat in a string $S$.
  Let $(\alpha,\beta) \in \Phi_S(w)$.
  Then, for any character $\beta' \neq \beta$,
  $(\alpha, \beta') \notin \Phi_S(w)$.
  Similarly, for any character $\alpha' \neq \alpha$,
  $(\alpha', \beta) \notin \Phi_S(w)$.
\end{lemma}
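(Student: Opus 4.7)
The key observation is that both coordinates of any pair in $\Phi_S(w)$ are forced to be unique by the single-occurrence conditions already built into the definition of $\Phi_S(w)$, so the proof should be almost immediate from unrolling the definition. My plan is simply to show that the unique occurrence of $\alpha w$ in $S$ determines its right neighbor, and, symmetrically, the unique occurrence of $w\beta$ in $S$ determines its left neighbor.

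First, for the statement about $\beta'$, I would fix $(\alpha,\beta) \in \Phi_S(w)$ and use the defining property $\occ_S(\alpha w) = 1$ to name the single position $i$ with $S[i..i+|\alpha w|-1] = \alpha w$. The condition $\occ_S(\alpha w \beta) = 1$ then forces $S[i+|\alpha w|] = \beta$ (and in particular $i+|\alpha w| \le |S|$; note that the terminal $\$$ in the assumed model guarantees the position exists, but even without it, the single occurrence of $\alpha w \beta$ must lie somewhere, and it has to coincide with the unique occurrence of $\alpha w$). Now suppose toward a contradiction that $(\alpha, \beta') \in \Phi_S(w)$ for some $\beta' \ne \beta$. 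Then $\occ_S(\alpha w \beta') = 1$, so there must exist some occurrence of $\alpha w$ immediately followed by $\beta'$. But the only occurrence of $\alpha w$ is at position $i$, and it is followed by $\beta \ne \beta'$, giving $\occ_S(\alpha w \beta') = 0$, a contradiction.

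Second, for the symmetric statement about $\alpha'$, I would repeat the argument on the left side: the defining conditions give $\occ_S(w\beta) = 1$, so there is a unique position $j$ with $S[j..j+|w\beta|-1] = w\beta$, and $\occ_S(\alpha w \beta) = 1$ forces the character immediately preceding this occurrence to be $\alpha$, i.e., $S[j-1] = \alpha$. Assuming $(\alpha',\beta) \in \Phi_S(w)$ with $\alpha' \ne \alpha$ would require another occurrence of $w\beta$ preceded by $\alpha'$, contradicting either $\occ_S(w\beta) = 1$ or $S[j-1] = \alpha$.

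I do not foresee a real obstacle: the only mild point to be careful about is checking that the relevant neighboring positions in $S$ exist. In the case of $\beta$ and $\beta'$ this is automatic because $\occ_S(\alpha w \beta) = 1 \ge 1$ (so the position $i+|\alpha w|$ is in range), and similarly $\occ_S(\alpha w \beta) = 1$ guarantees $j \ge 2$ for the left-side argument. Other than that, the proof is a direct application of the cardinality constraints in the definition of $\Phi_S(w)$.
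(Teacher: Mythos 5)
Your proof is correct and is essentially the same argument as the paper's: both exploit that $\occ_S(\alpha w)=1$ is incompatible with two distinct right-extensions $\beta\neq\beta'$ of $\alpha w$ each occurring (the paper phrases this as deriving $\occ_S(\alpha w)\geq 2$, you phrase it as the unique occurrence of $\alpha w$ forcing its right neighbor, so $\occ_S(\alpha w\beta')=0$). The symmetric case is handled identically in both.
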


\begin{proof}
  For a contradiction assume that $(\alpha, \beta') \in \Phi_S(w)$.
  Then, $\occ_S(\alpha w \beta) = 1$ and $\occ_S(\alpha w \beta') = 1$.
  Since $\beta \neq \beta'$, $\occ_S(\alpha w) \geq 2$.
  However, since $(\alpha,\beta) \in \Phi_S(w)$, we have $\occ_S(\alpha w) = 1$,
  a contradiction.
  The case for $(\alpha', \beta)$ is symmetric.
\end{proof}

Lemma~\ref{lem:unique_pair} implies that for any character $\alpha \in \Sigma$
that is a left-extension of a (maximal) repeat $w$ in $S$,
there is at most one character $\beta \in \Sigma$ that contributes to
the positive NF of $w$ in $S$.
Thus the following corollary is immediate:
\begin{corollary} \label{coro:NF_upperbound}
  For any repeat $w$ in a string $S$,
  $\phi_S(w) \leq \min\{|\Sigma^L_w|, |\Sigma^R_w|\}$,
  where $\Sigma^L_w$ and $\Sigma^R_w$ are the sets of left-extensions
  and right-extensions of $u$ in $S$, respectively.
\end{corollary}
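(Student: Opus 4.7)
The plan is to derive the corollary as an immediate counting consequence of Lemma~\ref{lem:unique_pair}, by exhibiting two injections from $\Phi_S(w)$ into $\Sigma^L_w$ and $\Sigma^R_w$ respectively.

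First I would define the projection $\pi_L \colon \Phi_S(w) \to \Sigma$ by $\pi_L(\alpha,\beta) = \alpha$, and observe that its image lies in $\Sigma^L_w$: indeed, if $(\alpha,\beta) \in \Phi_S(w)$ then by definition $\occ_S(\alpha w \beta) = 1$, which forces $\occ_S(\alpha w) \geq 1$, so $\alpha$ is a left-extension of $w$ in $S$. Next I would argue that $\pi_L$ is injective. Suppose $(\alpha,\beta), (\alpha,\beta') \in \Phi_S(w)$ share the same first coordinate $\alpha$. Lemma~\ref{lem:unique_pair} (applied to the pair $(\alpha,\beta)$) directly forbids $(\alpha,\beta') \in \Phi_S(w)$ whenever $\beta' \neq \beta$, so $\beta = \beta'$. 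Injectivity of $\pi_L$ then yields $\phi_S(w) = |\Phi_S(w)| \leq |\Sigma^L_w|$.

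The argument for $|\Sigma^R_w|$ is entirely symmetric: define $\pi_R(\alpha,\beta) = \beta$; its image lies in $\Sigma^R_w$ because $\occ_S(w \beta) = 1$ for any $(\alpha,\beta) \in \Phi_S(w)$; and the symmetric clause of Lemma~\ref{lem:unique_pair} rules out two pairs in $\Phi_S(w)$ sharing the second coordinate. Hence $\phi_S(w) \leq |\Sigma^R_w|$ as well, and combining the two bounds gives $\phi_S(w) \leq \min\{|\Sigma^L_w|, |\Sigma^R_w|\}$.

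There is no real obstacle here: the corollary is essentially a restatement of Lemma~\ref{lem:unique_pair} in cardinality form, and the only care needed is verifying that the projections actually land in $\Sigma^L_w$ and $\Sigma^R_w$ (which is immediate from the defining conditions of $\Phi_S(w)$ in equation~\eqref{eqn:Phi}).
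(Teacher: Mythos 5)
Your proof is correct and follows the same route as the paper, which derives the corollary as an immediate consequence of Lemma~\ref{lem:unique_pair}: each left-extension pairs with at most one right-extension in $\Phi_S(w)$ and vice versa. Your write-up merely makes the implicit injections $\pi_L$ and $\pi_R$ explicit, which is a faithful formalization of the paper's one-line justification.
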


Since the total number of right-extensions of the maximal repeats
is most $2n-2$ for any string of length $n \geq 2$ (c.f.~\cite{Blumer87}),
the next lemma immediately follows from Lemma~\ref{lem:NF_MR} and Corollary~\ref{coro:NF_upperbound}.
\begin{lemma} \label{lem:total_NF_upperbound}
For any string $S$ of length $n$,   
$\sum_{w \in \Substr(S)} \phi_S(u) \leq 2n-2$.
\end{lemma}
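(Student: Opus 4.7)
The plan is to combine Lemma~\ref{lem:NF_MR}, Corollary~\ref{coro:NF_upperbound}, and the cited structural fact about maximal repeats in a simple three-step argument; essentially all the work has already been done by the earlier results, so the proof reduces to a chain of inequalities.

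First, I would restrict the sum to maximal repeats. By Lemma~\ref{lem:NF_MR}, any $w \in \Substr(S)$ with $\phi_S(w) \geq 1$ lies in $\M(S)$, while all other substrings contribute $0$. Hence
\[
\sum_{w \in \Substr(S)} \phi_S(w) \;=\; \sum_{w \in \M(S)} \phi_S(w).
\]

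Second, I would apply Corollary~\ref{coro:NF_upperbound} term by term, using the $|\Sigma^R_w|$ side of the $\min$, to obtain
\[
\sum_{w \in \M(S)} \phi_S(w) \;\leq\; \sum_{w \in \M(S)} |\Sigma^R_w|.
\]

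Finally, I would bound the right-hand side by the structural fact, quoted from Blumer et al.~\cite{Blumer87} and recalled in the paragraph preceding the lemma, that the total number of right-extensions of maximal repeats in a string of length $n \geq 2$ is at most $2n-2$. (One way to see this: each $w \in \M(S)$ is an internal node of $\STree(S)$ whose out-degree equals $|\Sigma^R_w|$, and the sum of out-degrees over internal nodes is bounded by the total number of edges in $\STree(S)$, which is at most $2n-2$ since there are $n$ leaves and at most $n-1$ internal nodes.) Chaining the three inequalities gives the claimed bound. There is no real obstacle: the statement is a direct consequence of the previously established facts, and no additional combinatorial machinery beyond the cited Blumer bound is needed.
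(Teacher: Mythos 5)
Your proposal is correct and follows essentially the same route as the paper, which derives the bound in one sentence from Lemma~\ref{lem:NF_MR}, Corollary~\ref{coro:NF_upperbound}, and the cited fact that the maximal repeats of a length-$n$ string have at most $2n-2$ right-extensions in total. Your added justification of the Blumer bound via out-degrees in $\STree(S)$ is a correct elaboration of what the paper leaves as a citation.
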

This bound is weaker than
the bound $\sum_{w \in \Substr(S)} \phi_S(w) \leq n$ given by
Guo et al.~\cite{GuoEWZ24},
but it is sufficient for our purpose.
Namely, if needed, one can explicitly store the character pairs in $\Phi_S(w)$
in $O(n)$ total space instead of the counts $\phi_S(w)$.
Our algorithm to follow can easily be modified to store
the character pairs in $\Phi_S(w)$ as well, in the same complexity.
  
\subsection{Recalling Weiner's Suffix Tree Construction}

In this subsection, we briefly recall
how Weiner's algorithm updates $\STree(S)$ for a string $S$
to $\STree(aS)$ with a new character $a$ that is prepended to $S$.

Suppose that all the Weiner links have also been computed for
the nodes of $\STree(S)$.
Since our input string $S$ terminates with $\$$ at its right end,
all the suffixes of $S$ are represented by the leaves of $\STree(S)$.

To update $\STree(S)$ into $\STree(aS)$,
it is required to insert a new leaf that represents the new suffix $aS$.
To insert a new leaf for $aS$,
Weiner's algorithm finds the locus of the longest repeating prefix $av$ of $aS$ on $\STree(S)$.
Then, $av$ will be the parent of the new leaf for $aS$.

\begin{figure}[tbh]
  \centering
  \raisebox{3mm}{
    \includegraphics[scale=0.38]{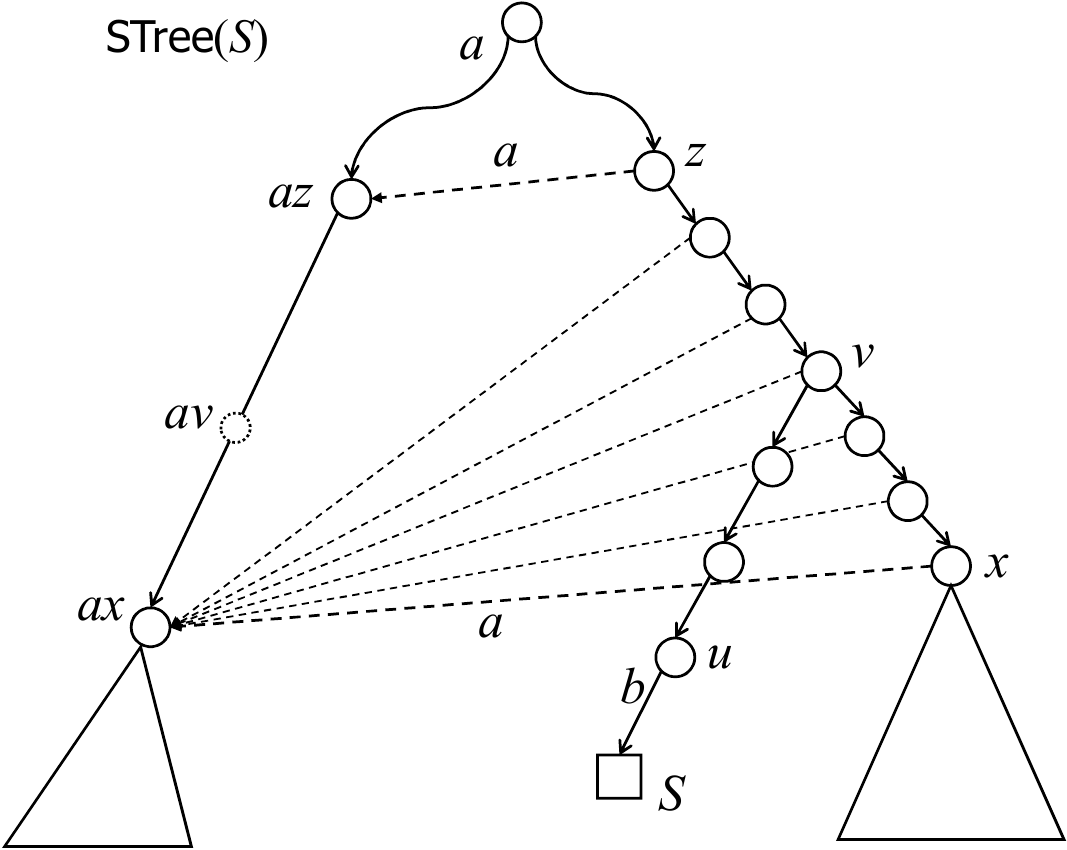}
  }  
  \hfill
  \includegraphics[scale=0.38]{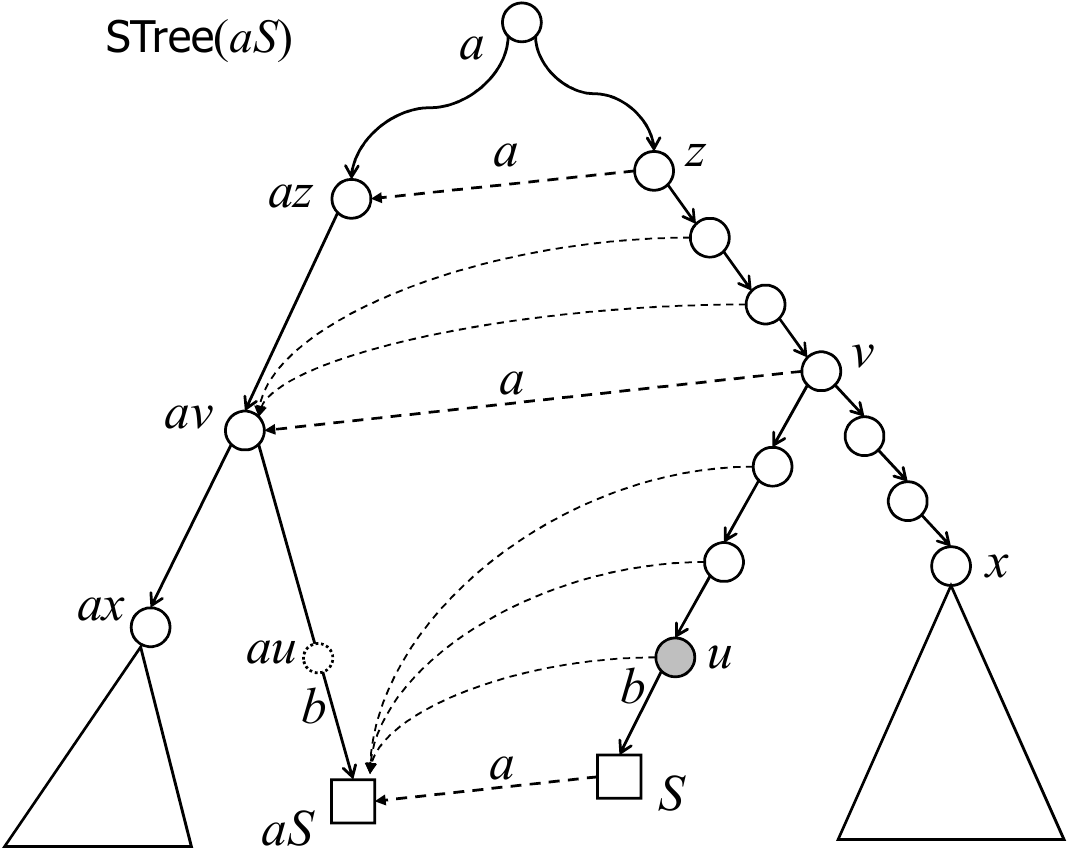}
  \caption{A snapshot of Weiner's suffix tree construction algorithm. Solid arcs represent edges, and broken thick (resp. thin) arcs represent hard (resp. soft) Weiner links. Left: By climbing up the path from leaf $S$, we find the lowest ancestor $v$ that has a soft Weiner link $\W{a}{v} = ax$ that is labeled by $a$, and the lowest ancestor $z$ that has a hard Weiner link $\W{a}{z} = az$ that is labeled by $a$. Right: Split the edge from node $az$ to $ax$ at the locus for $av$, and insert a new leaf for $aS$ from $av$. If the parent $u$ of $S$ is not $v$, then the NF of the parent $u$ of $S$ is increased by 1, since $(a,b) \in \Phi_{aS}(u) \setminus \Phi_{S}(u)$.}
  \label{fig:Weiner}
\end{figure}

In Weiner's algorithm, we climb up the path from the leaf for $S$
and find the lowest ancestor $v$ such that the Weiner link $\W{a}{v}$ with
the new character $a$ is defined.
See also Figure~\ref{fig:Weiner} for illustration.
There are two cases:

(1) If $\W{a}{v}$ is a soft Weiner link, then let $ax$ be the destination node
of this Weiner link.
Note that $av$ is a proper prefix of $ax$.
We continue climbing up the path until finding the lowest ancestor $z$
that has a hard Weiner link $\W{a}{z}$ with $a$.
We move to node $az$ using the hard Weiner link $\W{a}{z} = az$.
The locus for $av$ is on the edge from $az$ to $ax$.
We split this edge at the locus for $av$, and insert the new leaf for $aS$.
Since $av$ is a node in the updated suffix tree $\STree(aS)$,
the Weiner link $\W{a}{v}$ becomes hard and points to $av$.
The soft Weiner links of the nodes between $v$ and $z$ exclusive
are redirected to $av$.
Finally, we copy all (soft or hard) Weiner links of node $ax$
to the new node $av$,
namely, if $\W{d}{ax} = y$ for character $d \in \Sigma$
and node $y = dax$, then we set $\W{d}{av} = y$.

(2) If $\W{a}{v}$ is a hard Weiner link,
then we move to node $av$ using the hard Weiner link $\W{a}{v} = av$
(this is the case where $v = z$ in Figure~\ref{fig:Weiner}).
We create a new leaf for $aS$ as a new child of $av$.

In both cases, 
we add new Weiner links from the nodes in the path between $S$ and $v$
($S$ inclusive and $v$ exclusive)
to the new leaf $aS$ with the character $a$.

\begin{lemma}[\cite{Weiner,Blumer85}] \label{lem:Weiner_time}
  The number of nodes visited during the update from $\STree(S)$ to $\STree(aS)$
  is amortized $O(1)$,
  and the total time cost for updating $\STree(S)$ into $\STree(aS)$ is amortized $O(\log \sigma)$.
\end{lemma}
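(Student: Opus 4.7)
The plan is an amortized analysis that charges the work of each update from $\STree(S)$ to $\STree(aS)$ to Weiner-link operations, invoking Lemma~\ref{lem:DAWG_linear} to cap the total number of (hard plus soft) Weiner links ever present in the tree by $3n-4$.

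For the claim on the number of visited nodes, the key observation is that in each iteration the nodes traversed on the climb from the leaf for $S$ up to $v$ (and further up to $z$ in case (1)) can be partitioned into two groups: the strict descendants of $v$ on the climb, each of which has a brand new Weiner link $\W{a}{\cdot}$ installed in this iteration pointing to the newly added leaf for $aS$; and the ancestors from $v$ up to $z$ exclusive, each of whose existing soft Weiner link $\W{a}{\cdot}$ is either promoted from soft to hard (at $v$) or redirected from $ax$ to the new node $av$. Thus each visited node incurs at least one Weiner-link operation, plus an $O(1)$ overhead at $z$ and at the starting leaf. Summed over all $n$ iterations, the total number of Weiner-link creation events is bounded by the final count, which is $O(n)$ by Lemma~\ref{lem:DAWG_linear}; the total number of redirection events is bounded by the same quantity by the standard incremental-DAWG amortized argument of~\cite{Blumer85}. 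Therefore the total number of node visits is $O(n)$, yielding amortized $O(1)$ per character.

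For the time-cost claim, I store the outgoing Weiner links at each node in a balanced BST keyed by character, so every individual Weiner-link lookup, insertion, redirection, or deletion costs $O(\log \sigma)$. The climb thus contributes $O(\log \sigma)$ per visited node and $O(n \log \sigma)$ in total. The only other nontrivial work per iteration is, in case (1), copying all outgoing Weiner links of $ax$ onto the newly created node $av$; each such copy fills a single Weiner-link slot of a new node, and since nodes are never deleted the total number of slot-fills over the whole construction is bounded by the number of Weiner links in the final suffix tree, which is $O(n)$ by Lemma~\ref{lem:DAWG_linear}. Thus the aggregate copying cost is $O(n \log \sigma)$, giving amortized $O(\log \sigma)$ per character.

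The main obstacle I anticipate is the clean accounting of the soft Weiner-link redirections in case (1), since in principle a single iteration may redirect many such links at once and a single link may be redirected repeatedly as further edge splits occur on the path to its destination. I would handle this by appealing to the incremental-DAWG analysis of Blumer et al.~\cite{Blumer85}, in which the total cost of such redirections across the entire right-to-left construction is shown to be $O(n)$ by charging each redirection to the new primary node whose creation triggered it. Combining this with the creation, copy, and BST costs yields the two stated amortized bounds.
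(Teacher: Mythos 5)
The paper does not actually prove this lemma---it imports it from Weiner and from Blumer et al.---so your proposal has to stand on its own as a reconstruction of the classical amortized analysis. Your accounting of the leaf-to-$v$ portion of the climb is sound: each node strictly below $v$ on the climb path receives a brand-new Weiner link labeled $a$ pointing to the new leaf, links are never deleted, and Lemma~\ref{lem:DAWG_linear} caps the final (hence total ever-created) number of links at $3n-4$. The same slot-filling argument correctly handles the cost of copying the outgoing links of $ax$ onto the new node $av$, and the $O(\log\sigma)$ per link operation via balanced search trees is fine.

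The gap is in the $v$-to-$z$ portion of the climb, i.e., the redirections. Your claim that the total number of redirection events is ``bounded by the same quantity'' (the final link count) is unjustified: a redirection creates no link, and a single soft link $\W{a}{w}$ can be redirected several times over the run (once for each node later inserted between $aw$ and its current target during an iteration that prepends $a$), so there is no injection from redirection events into links of the final tree. Your fallback---charging each redirection to the new node whose creation triggered it---also fails, because the creation of one node $av$ triggers one redirection for \emph{every} node strictly between $v$ and $z$ in that iteration; the charge per new node is not $O(1)$, so summing over new nodes does not yield $O(n)$. The classical argument of Weiner/Blumer et al.\ does not decompose the climb into creations and redirections at all: it bounds the entire climb length $k_i$ of iteration $i$ by a potential function, namely the node-depth of the most recently inserted leaf. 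Since each non-root ancestor of $az$ starts with $a$ and its suffix link lands on a distinct ancestor-or-self of $z$, one gets $\mathrm{depth}(az)\leq \mathrm{depth}(z)+1$, hence the new leaf for $aS$ has depth at most $\mathrm{depth}(z)+3=\mathrm{depth}(\mbox{leaf for }S)-k_i+3$, and telescoping gives $\sum_i k_i=O(n)$, which bounds the creations and the redirections simultaneously. Without this (or an equivalent) potential argument, your link-counting scheme does not close the proof of the amortized $O(1)$ bound on visited nodes, and the $O(\log\sigma)$ time bound inherits the same gap.
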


The $\log$ factor in the update time is the cost for accessing each Weiner link that is labeled by the new character $a$.

\subsection{Computing $\PNF(aS) \triangle \PNF(S)$}

Suppose that we have built $\STree(S)$ for string $S$ of length $n$ augmented with soft and hard Weiner links,
and that we have computed the list $\PNF(S)$.
For each node $w$ of $\STree(S)$ with $\phi_{S}(w) \geq 1$,
$(w, \phi_{S}(w)) \in \PNF(S)$ is stored in $w$.
By Lemma~\ref{lem:NF_MR} (or alternatively by Lemma~\ref{lem:total_NF_upperbound}),
we can store $\PNF(S)$ for all nodes $w$ of $\STree(S)$
in total $O(n)$ space.

We update $\STree(S)$ to $\STree(aS)$ with new character
$a$ using Weiner's algorithm, as described above.

By recalling the definitions of $\Phi_S(w)$
and $\phi_S(w)$ respectively in Equation~\ref{eqn:Phi} and in Equation~\ref{eqn:phi} for a string $w$,
we can reduce Problem~\ref{prob:online_NF_right_to_left}
to computing ordered character pairs $(\alpha,\beta) \in \Sigma \times \Sigma$ as described in the following lemma:
\begin{lemma} \label{lem:NF_iff}
  For any characters $\alpha,\beta \in \Sigma$
  and strings $w,S \in \Sigma^*$,
  $(\alpha,\beta) \in \Phi_S(w)$ iff
  \begin{itemize}
  \item[(1)] $w$ is an internal branching node of $\STree(S)$,
  \item[(2)] $w$ has a soft Weiner link $\W{\alpha}{w}$ that points to a leaf $\alpha wt$, where $t \in \Sigma^+$ and $\beta = t[1]$, and
  \item[(3)] $w$ has an out-edge that begins with $\beta$ and leads to a leaf.
  \end{itemize}
\end{lemma}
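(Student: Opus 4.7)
The plan is to prove both implications by translating each condition in $\Phi_S(w)$ (namely $\occ_S(w) \geq 2$, $\occ_S(\alpha w) = \occ_S(\alpha w \beta) = \occ_S(w \beta) = 1$) into the corresponding structural property of $\STree(S)$ and its Weiner links, using the standard correspondence between occurrence counts of a substring $u$ and the size of the subtree beneath the locus of $u$.

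For the forward direction, I assume $(\alpha, \beta) \in \Phi_S(w)$. Condition (1) follows from Lemma~\ref{lem:NF_MR}: since $\phi_S(w) \geq 1$, $w \in \M(S)$, and every maximal repeat is an internal branching node of $\STree(S)$. For condition (2), I first note that $\W{\alpha}{w}$ is defined because $\alpha w \in \Substr(S)$. The link is soft rather than hard because $\alpha w$ is not a node of $\STree(S)$: it is not branching as $\occ_S(\alpha w) = 1$, and it is not a leaf because $\alpha w \beta$ is a substring, so $\alpha w$ is strictly extended in $S$ and hence not a suffix ending in \$. The soft link then points to the deepest node $\alpha w t$ extending $\alpha w$; since $\occ_S(\alpha w t) \leq \occ_S(\alpha w) = 1$, this node is non-branching, and by the same argument not internal, so it is a leaf. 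The first character of $t$ must be $\beta$, because the unique occurrence of $\alpha w$ in $S$ is followed by $\beta$ (as $\alpha w \beta$ occurs in $S$ and $\occ_S(\alpha w) = 1$). For condition (3), $\occ_S(w \beta) = 1$ means the subtree hanging from the $\beta$-out-edge of $w$ contains exactly one leaf; since every internal node is branching, this can happen only if the $\beta$-edge leads directly to that leaf.

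For the backward direction, I assume (1), (2), (3) and verify the four occurrence identities. From (1), $w$ is a repeat so $\occ_S(w) \geq 2$. From (3), the subtree below the $\beta$-out-edge of $w$ consists of a single leaf, so $\occ_S(w \beta) = 1$. For (2), I use that the soft link target $\alpha w t$ is a leaf, hence represents a unique suffix of $S$; since $\alpha w$ lies strictly inside the edge ending at this leaf (the link is soft, so $\alpha w$ is not a node), the whole subtree below $\alpha w$'s locus is just this one leaf, which yields $\occ_S(\alpha w) = 1$, and since $t$ starts with $\beta$, also $\occ_S(\alpha w \beta) = 1$. Combining, $(\alpha, \beta) \in \Phi_S(w)$.

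The only real subtlety I anticipate is the careful case analysis of the locus of $\alpha w$: ruling out the possibility that $\alpha w$ is a leaf (in the forward direction) and making sure that a soft link landing at a leaf really forces every occurrence of $\alpha w$ to be followed by the full edge label (in the backward direction). Once one articulates that non-branching, non-leaf strings are exactly those lying strictly inside an edge, and that the number of occurrences of any substring equals the leaf count in the subtree hanging below its locus, both directions reduce to bookkeeping.
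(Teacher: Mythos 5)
Your proof is correct and follows essentially the same route as the paper's: both directions are handled by translating the occurrence counts $\occ_S(\alpha w)=\occ_S(w\beta)=\occ_S(\alpha w\beta)=1$ and $\occ_S(w)\geq 2$ into the locus/leaf structure of $\STree(S)$, with condition (1) obtained from Lemma~\ref{lem:NF_MR} together with the end-marker assumption (your backward direction is in fact spelled out in more detail than the paper's, which dismisses it as clear). The only blemish is the phrase ``the deepest node $\alpha w t$ extending $\alpha w$'' --- by definition the Weiner link targets the \emph{shortest} extension of $\alpha w$ that is a node --- but since $\occ_S(\alpha w)=1$ forces the subtree below the locus of $\alpha w$ to be a single non-branching path to one leaf, the two coincide and your argument is unaffected.
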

\begin{proof}
  Suppose $(\alpha,\beta) \in \Phi_S(w)$ holds.
  Then, by definition, $w$ is a repeat in $S$.
  Further, by Lemma~\ref{lem:NF_MR} and by our assumption that 
  $S$ terminates with a unique end-marker $\$$,
  (1) $w$ is an internal branching node in $\STree(S)$.
  By assumption we have $\occ_{S}(w \beta) = 1$,
  and thus (3) $w$ has an out-edge that begins with $\beta$ and leads to a leaf.
  Also, by assumption we have $\occ_{S}(\alpha w) = \occ_S(\alpha w \beta) = 1$.
  This implies that $\alpha w$ is on an edge leading to a leaf $\alpha wt$
  with $t \in \Sigma^+$ and $t[1] = \beta$.
  Also, $\alpha w \beta$ is on the same edge if $|t| > 1$,
  or it is a leaf if $|t| = 1$.
  Since $w$ is preceded by $\alpha$ in $S$,
  $w$ has a Weiner link $\W{\alpha}{w}$ labeled $\alpha$.
  Since $\alpha w$ is on an edge,
  $\W{\alpha}{w} = \alpha w t$ with $|t| \geq 1$,
  and thus it is a soft Weiner link.
  This leads to (2).

  Suppose (1), (2), and (3) all hold.
  Then, it is clear that $\occ_{S}(\alpha w) = \occ_{S}(w \beta) = \occ_{S}(\alpha w \beta) = 1$ and $\occ_{S}(w) \geq 2$.
\end{proof}

\begin{figure}[tbh]
  \centering
  \includegraphics[scale=0.5]{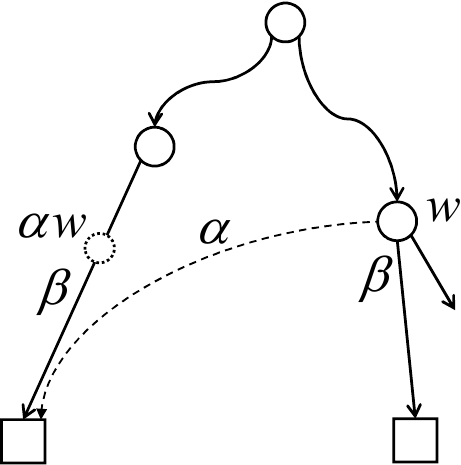}
  \caption{Illustration for Lemma~\ref{lem:NF_iff},
    where the rectangles represent leaves,
    the solid circles represent internal nodes,
    and the dashed circle represents an implicit node (a locus on an edge).
    The dashed arc represents a soft Weiner link.}
  \label{fig:NF_iff}
\end{figure}

See Figure~\ref{fig:NF_iff}
that illustrates the situation in Lemma~\ref{lem:NF_iff}.
The intuition of our online NF computation
based on Weiner's algorithm is as follows:
Each time the suffix tree is updated,
we examine whether a substructure
described in Figure~\ref{fig:NF_iff} newly appears in the suffix tree,
and whether an existing substructure described in Figure~\ref{fig:NF_iff} disappears from the suffix tree.
More formally, by the definitions of $\PNF(S)$ and $\PNF(aS)$
and by Lemma~\ref{lem:NF_iff},
it suffices for us to check the following types of nodes,
where Type (i), Type (ii), and Type (iii)-1\&2
correspond to Case (i), Case (ii), and Case (iii) of Section~\ref{sec:online_NF_def}, respectively:
\begin{description} 
  \item[Type (i)] Existing maximal repeat $w \in \M(aS) \cap \M(S)$ in $aS$ that obtains
    a new left-extension $\alpha$
    and an existing right-extension $\beta$
    such that $\occ_{aS}(\alpha w \beta) = \occ_{aS}(\alpha w) = \occ_{aS}(w \beta) = 1$.
    We will later show that the parent $u$ of the leaf for $S$
    is the only candidate for this type.

  \item[Type (ii)] New maximal repeat $w \in \M(aS) \setminus \M(S)$ in $aS$
  that has a left-extension $\alpha$ and a right-extension $\beta$ with
  $\occ_{aS}(\alpha w \beta) = \occ_{aS}(\alpha w \beta) = \occ_{aS}(w \beta) = 1$.
  We will later show that the parent $av$ of the new leaf for $aS$ is the only candidate for this type.

  \item[Type (iii)-1] Existing maximal repeat $w \in \M(aS) \cap \M(S)$ in $aS$
    that has a left-extension $\alpha$ and a right-extension $\beta$
    with $\occ_{S}(\alpha w) = \occ_{S}(w \beta) = \occ_{S}(\alpha w \beta) = 1$ and $\occ_{aS}(\alpha w) > 1$.
    Notice that $w \in \M(S)$ holds for each node $w$ of Type (iii)-1,
    since $\phi_{S}(w) \geq 1$ for Case (iii) in Section~\ref{sec:online_NF_def}.
    We will later show that the lowest ancestor $v$ of the leaf for $S$
    that has a soft Weiner link labeled $a$ in $\STree(S)$
    is the only candidate for this type.
    
  \item[Type (iii)-2] Existing maximal repeat $w \in \M(aS) \cap \M(S)$ in $aS$
    that has a left-extension $\alpha$ and 
    a right-extension $\beta$ with
    $\occ_{S}(\alpha w) = \occ_{S}(w \beta) = \occ_{S}(\alpha w \beta) = 1$ and $\occ_{aS}(w \beta) > 1$.
    Notice that $w \in \M(S)$ holds for each node $w$ of Type (iii)-2,
    since $\phi_{S}(w) \geq 1$ for Case (iii) in Section~\ref{sec:online_NF_def}.
    We will later show that
    the node $az$ that is the destination node of the hard Weiner link
    from $z$ labeled $a$ is the only candidate for this type.
    
\end{description}

In Sections~\ref{sec:type-i}-\ref{sec:type-iii-2},
we show how to deal with the nodes of Types (i), (ii), (iii)-1, and (iii)-2.
Our algorithm simply uses Weiner's algorithm
that updates $\STree(S)$ to $\STree(aS)$,
and it suffices for us to perform simple verifications
for the visited nodes based on Lemma~\ref{lem:NF_iff}.

\subsection{Dealing with Type (i) node} \label{sec:type-i}
Here we show that the number of nodes of Type (i) is at most one per iteration.

\begin{lemma} \label{lem:type_i}
Let $v$ be the lowest ancestor of $S$ such that the Weiner link $\W{a}{v}$ with
the new character $a$ is defined,
and $u$ be the parent of the leaf representing $S$ in $\STree(S)$.
If $u \neq v$,  
then $u$ is a node of Type (i)
and we have $(a,b) \in \Phi_{aS}(u) \setminus \Phi_{S}(u)$,
where $b$ is the first character of the edge label
from $u$ to $S$ (see also Figure~\ref{fig:Weiner}).
In addition, $u$ is the only possible node of Type (i).
\end{lemma}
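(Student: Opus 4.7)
The plan is to verify both assertions via Lemma~\ref{lem:NF_iff}, using it in $\STree(aS)$ to establish the new pair, and reading off both existence and uniqueness from what Weiner's update does on the ancestral path from leaf $S$ up to (but excluding) $v$.

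For existence, I apply Lemma~\ref{lem:NF_iff} to $u$ in $\STree(aS)$ with candidate pair $(a,b)$. Condition~(1) is automatic, since $u$ is already an internal branching node of $\STree(S)$ and Weiner's update only adds nodes and edges. Condition~(3) is automatic too: the $b$-labeled out-edge of $u$ reaches leaf $S$ in $\STree(S)$ and is untouched by the update, so leaf $S$ remains a leaf of $\STree(aS)$. Only condition~(2) requires real work. Since $u$ lies strictly between leaf $S$ and $v$ on the ancestral path, Weiner's algorithm installs a fresh Weiner link $\W{a}{u}$ to the new leaf $aS$; this link is soft because $|aS|>|au|$, and writing $aS = au\cdot t$ with $t=S[|u|+1..|S|]$ gives $t[1]=S[|u|+1]=b$, exactly matching the form required by condition~(2). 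Hence $(a,b)\in \Phi_{aS}(u)$. Conversely, $(a,b)\notin \Phi_S(u)$ is immediate from the hypothesis: $u\neq v$ is precisely the statement that $\W{a}{u}$ is undefined in $\STree(S)$, so condition~(2) already fails for $\Phi_S(u)$.

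For uniqueness, I trace the structural conditions of Lemma~\ref{lem:NF_iff} back to Weiner's update. Any Type~(i) candidate $w$ must acquire a new left-extension $\alpha$ in $aS$, i.e., $\W{\alpha}{w}$ must be newly defined in $\STree(aS)$. But Weiner's update only creates $a$-labeled Weiner links, and only at ancestors of leaf $S$ strictly below $v$. So $\alpha=a$ and $w$ lies on the root-to-$S$ path strictly below $v$. Condition~(2) of Lemma~\ref{lem:NF_iff} then pins $\beta=S[|w|+1]$, and condition~(3) demands that the $\beta$-labeled out-edge of $w$ hit a leaf in $\STree(aS)$. The root-to-$S$ path contains exactly one leaf---leaf $S$ itself---whose parent is $u$ both before and after Weiner's update, so $w=u$ is forced.

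The main obstacle, and the reason I route everything through Lemma~\ref{lem:NF_iff}, is that direct reasoning with occurrence counts in $aS$ versus $S$ gets entangled with overlap pathologies (for instance, $S$ beginning with a long run of the character $a$), whereas the structural conditions of Lemma~\ref{lem:NF_iff} match cleanly against the three local modifications Weiner's update performs at and around $u$. The decisive clause is condition~(3): among all ancestors of leaf $S$ strictly below $v$, only the parent of leaf $S$, namely $u$, has a $\beta$-edge that reaches a leaf.
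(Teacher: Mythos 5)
Your proof is correct and follows essentially the same route as the paper's: both establish $(a,b)\in\Phi_{aS}(u)\setminus\Phi_S(u)$ by checking the three conditions of Lemma~\ref{lem:NF_iff} against the new soft Weiner link from $u$ to the leaf $aS$, and both obtain uniqueness by restricting candidates to the ancestors of leaf $S$ strictly below $v$ (the only existing nodes acquiring a new left-extension, necessarily labeled $a$) and observing that condition~(3) of Lemma~\ref{lem:NF_iff} --- the $\beta$-edge must reach a leaf --- singles out the parent $u$ of leaf $S$. The only cosmetic difference is that you certify $(a,b)\notin\Phi_S(u)$ via the failure of condition~(2) ($\occ_S(au)=0$) where the paper uses $\occ_S(aub)=0$; this is immaterial.
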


\begin{proof}
  It follows from the definition of Type (i) nodes that
  for any Type (i) nodes,
  the first character $\alpha$ in the pair $(\alpha, \beta)$
  has to be the new character $a$
  that is prepended to $S$.
  If $(a, \beta) \in \Phi_{aS}(u)$ for the parent $u$ of the leaf $S$,
  then this character $\beta$ must be unique by Lemma~\ref{lem:unique_pair}.
  Further by Lemma~\ref{lem:NF_iff},
  this $\beta$ is the first character $b$ of the edge label from $u$ to $S$
  (see also the right diagram of Figure~\ref{fig:Weiner}).
  Now consider the case that $u \neq v$,
  where $v$ is the lowest ancestor of $S$
  that has a (soft or hard) Weiner link labeled $a$ in
  $\STree(S)$ before the update.
  Since $u \neq v$,
  $u$ obtains a new soft Weiner link $\W{a}{u}$ labeled $a$
  toward the leaf $aS$
  in $\STree(aS)$ after the update 
  (see the right diagram of Figure~\ref{fig:Weiner}).
  Then the conditions of Lemma~\ref{lem:NF_iff}
  are satisfied for the node $u$ and the character pair $(a,b)$ in $\STree(aS)$.
  Therefore, $(a,b) \in \Phi_{aS}(u)$ holds in this case.
  This can also be observed by the corresponding new substructure in form of Figure~\ref{fig:NF_iff} that appears in $\STree(aS)$.
  Clearly $(a,b) \notin \Phi_S(u)$ because $\occ_{S}(aub) = 0$.
  Hence $u$ is of Type (i) when the aforementioned conditions are satisfied.

  Next, we show that $u$ is the only possible node of Type (i).
  Let $u'$ be any node on the path between $v$ and $u$, exclusive.
  Notice that these nodes $u'$ are the only nodes
  other than $u$
  that obtain a new left-extension $a$ (see also Figure~\ref{fig:Weiner}).
  Assume for a contrary that
  $u'$ is of Type (i),
  which implies that 
  $\occ_{aS}(au') = 1$ and $\occ_{S}(au') = 0$.
  Let $b'$ be the only character that immediately follows $au'$ in $aS$,
  namely $\occ_{aS}(au'b') = 1$.
  Note that such character $b'$ must exist for each $u'$.
  It follows from Lemma~\ref{lem:NF_iff} that
  it suffices to check whether $u'$ has an out-edge
  labeled $b'$ that leads to a leaf.
  However, this is not the case:
  $au'b'$ is a prefix of $aS$,
  and thus $u'b'$ is a prefix of $S$.
  Thus, $b'$ is the first character in the path label from $u'$ to the leaf $S$.
  Since $u'$ is a non-parent ancestor of the leaf $S$,
  the out-edge from $u'$ that begins with $b'$ does not lead to a leaf.
  Thus, $u'$ cannot be a Type (i) node by Lemma~\ref{lem:NF_iff}.
\end{proof}

This node $u$ is found in the process of climbing up
the path from the leaf for $S$ toward its lowest ancestor $z$
that has a hard Weiner link $\W{a}{z}$ for $a \in \Sigma$.
The character $b$ can easily be obtained
by using the string depths of $u$ and $S$.
When all the conditions in Lemma~\ref{lem:type_i} are satisfied,
then $u$ is of Type (i).
We then update $\Phi_{aS}(u) \leftarrow \Phi_{S}(u) \cup \{(a,b)\}$.
We add $(u, 1)$ to the list $\mathcal{L}$ if $\phi_{S}(u) = 0$, 
and replace $(u, \phi_{S}(u))$ with $(u, \phi_{S}(u)+1) = (u, \phi_{aS}(u))$ in $\mathcal{L}$ otherwise.

\subsection{Dealing with Type (ii) node}

Weiner's algorithm inserts at most one new internal node per iteration,
which is the parent $av$ of the new leaf representing $aS$ (see also Figure~\ref{fig:Weiner_leaf} for an illustration).
Thus, $av$ is the only candidate for a node $w$ of Type (ii),
implying that the number of nodes of Type (ii) is at most one per iteration.
The following lemma shows the only case where the node $av$ obtains a new pair of characters in $\Phi_{aS}(av)$.

\begin{figure}[tbh]
  \centering
  \raisebox{3mm}{
    \includegraphics[scale=0.34]{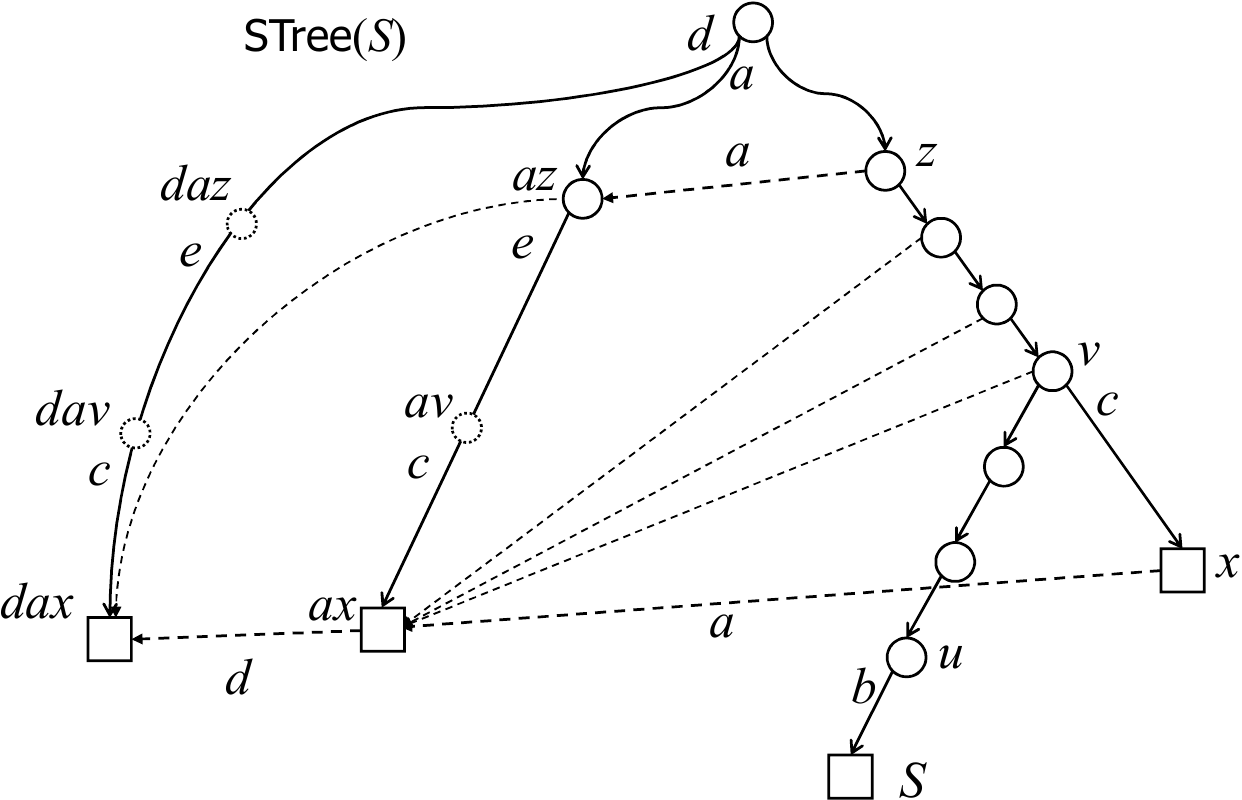}
  }  
  \hfill
  \includegraphics[scale=0.34]{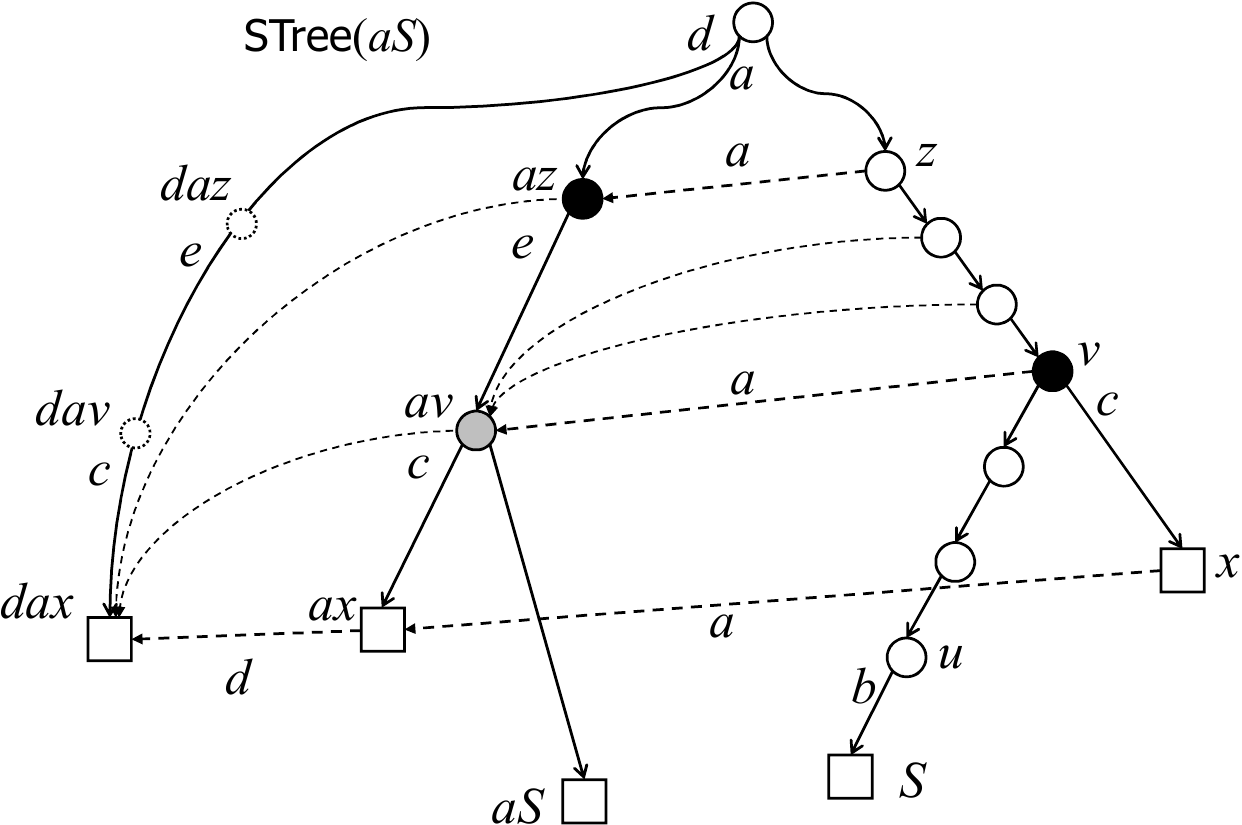}
  \caption{A snapshot of Weiner's suffix tree construction algorithm
    in the case where $ax$ is a leaf.
    If the leaf $x$ is a child of $v$,
    then $(a,c) \in \Phi_{S}(v) \setminus \Phi_{aS}(v)$.
    Let $d$ be a unique character that immediately precedes $ax$ in $S$.
    Then $(d,c) \in \Phi_{aS}(av) \setminus \Phi_{S}(av)$.
    If $daz$ in an implicit node on the edge leading to the leaf $dax$,
    then $(d, e) \in \Phi_{S}(az) \setminus \Phi_{aS}(az)$.}
  \label{fig:Weiner_leaf}
\end{figure}

\begin{lemma} \label{lem:type-ii}
  Let $v$ be the lowest ancestor of leaf $S$ in $\STree(S)$ such that $\W{a}{v}$ is defined (see also Figure~\ref{fig:Weiner_leaf}),
  and let $ax$ be the destination node of the Weiner link $\W{a}{v}$ from $v$ labeled $a \in \Sigma$ in $\STree(S)$ before the update.
  Let $c$ be the first character of the edge from $av$ to $ax$.
  If $av$ is on an edge in $\STree(S)$ and the node $ax$ is a leaf,
  then $(d, c) \in \Phi_{aS}(av) \setminus \Phi_{S}(av)$,
  where $d \in \Sigma$ is the unique character which immediately precedes $ax$ in $S$. 
\end{lemma}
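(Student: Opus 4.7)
The plan is to apply Lemma~\ref{lem:NF_iff} twice with $w = av$, $\alpha = d$, $\beta = c$: once in $S$ to conclude $(d,c) \notin \Phi_{S}(av)$, and once in $aS$ to conclude $(d,c) \in \Phi_{aS}(av)$.

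For $(d,c) \notin \Phi_{S}(av)$, I would simply observe that the assumption ``$av$ is on an edge in $\STree(S)$'' means $av$ is not an internal branching node of $\STree(S)$, so condition~(1) of Lemma~\ref{lem:NF_iff} fails. Equivalently, $av$ lies strictly inside the edge leading to the leaf $ax$, so $\occ_{S}(av) = \occ_{S}(ax) = 1$, and $av$ is not even a repeat in $S$.

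For $(d,c) \in \Phi_{aS}(av)$, I would verify the three conditions of Lemma~\ref{lem:NF_iff} in $\STree(aS)$ using the description of Weiner's update step recalled above. Condition~(1): the update splits the edge leading to $ax$ at the locus for $av$ and attaches the new leaf $aS$ below, turning $av$ into an internal branching node. Condition~(3): the out-edge from $av$ toward $ax$ begins with $c$ by the definition of $c$, and $ax$ remains a leaf of $\STree(aS)$ since the update creates only the single new leaf $aS$ and never demotes an existing leaf. Condition~(2) is the most delicate: since $ax$ is a leaf of $\STree(S)$, $\occ_{S}(ax) = 1$, and by the definition of $d$ the only occurrence of $ax$ is preceded by $d$; hence $dax$ is itself a suffix of $S$ with $\occ_{S}(dax) = 1$ and thus a leaf of $\STree(S)$, and $\W{d}{ax} = dax$ in $\STree(S)$. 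Weiner's update copies every Weiner link out of $ax$ to the newly created node $av$, so in $\STree(aS)$ we obtain $\W{d}{av} = dax = dav \cdot t$ with $t = ax[|av|+1..|ax|] \in \Sigma^{+}$ (because $\W{a}{v}$ is a soft link, so $|av| < |ax|$) and $t[1] = c$. Since $dax$ remains a leaf in $\STree(aS)$, this makes $\W{d}{av}$ a soft Weiner link to a leaf, establishing condition~(2).

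The main obstacle is condition~(2): one must argue carefully that the link copied from $ax$ to $av$ by Weiner's algorithm is indeed labeled by $d$ and still points to the same leaf $dax$ after the update, and also that $dax$ is not split or otherwise modified by the insertion of $aS$ (only the edge leading to $ax$ is split). Once these structural facts about Weiner's update are pinned down, combining the two applications of Lemma~\ref{lem:NF_iff} yields $(d,c) \in \Phi_{aS}(av) \setminus \Phi_{S}(av)$.
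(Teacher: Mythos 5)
Your proposal is correct and follows essentially the same route as the paper: both directions are obtained by applying Lemma~\ref{lem:NF_iff} to $w=av$, with $(d,c)\notin\Phi_S(av)$ following from $\occ_S(av)=1$ and $(d,c)\in\Phi_{aS}(av)$ from the new soft Weiner link $\W{d}{av}$ into the leaf $dax$. The only cosmetic difference is that you justify the existence of that soft link via the link-copying step of Weiner's update, whereas the paper argues structurally that $dav$ stays on an edge because $av$ is the only new internal node; both are valid.
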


\begin{proof}
  Recall that $av$ becomes an internal node
  as the parent of a new leaf for $aS$ in $\STree(aS)$ after the update.
  Since Weiner's algorithm inserts at most one internal node
  and it is $av$ in this case,
  the string $dav$ remains on an edge in $\STree(aS)$ after the update.
  Thus, $av$ obtains a soft Weiner link labeled $d \in \Sigma$ toward node $dax$.
  Since $ax$ is a leaf, $dax$ is also a leaf.
  For the string $t$ such that $dax = davt$,
  we have $t[1] = c$.
  Thus, we have $(d,c) \in \Phi_{aS}(av)$ by Lemma~\ref{lem:NF_iff}.
  This can also be observed by the corresponding new substructure in form of Figure~\ref{fig:NF_iff} that appears in $\STree(aS)$.
  Due to Lemma~\ref{lem:unique_pair},
  $d$ is the unique character pair for $c$ that can possibly be a member of $\Phi_{aS}(av)$.
  Since $c$ was the only character immediately following $av$ in $S$,
  $(d,c)$ is the only character pair that could be a member for $\Phi_{aS}(av)$.
  Since $\occ_{S}(av) = 1$, $av$ was not a repeat in $S$ before the update.
  Thus $\phi_{S}(av) = 0$ and hence $(d,c) \notin \Phi_{S}(av)$ holds for the string $S$ before the update.

  If $ax$ is not a leaf, then $\occ_{aS}(avc) \geq \occ_{aS}(ax) > 1$,
  implying $(d, c) \notin \Phi_{aS}(av)$.
  Also, if $av$ is a node in $\STree(S)$,
  then $av$ was already a maximal repeat in $S$
  and thus $av$ is not a Type (ii) node by definition.
\end{proof}

  The locus for $av$ is found via the Weiner link $\W{a}{v}$ from the node $v$
  during the update from $\STree(S)$ to $\STree(aS)$.
  We remark that $av$ is on an edge in $\STree(S)$
  if and only if $\W{a}{v}$ is a soft Weiner link in $\STree(S)$ before the update.
  Further, we can easily check if $ax$ is a leaf or not.
  The character $c$ can be obtained by using
  the string depths of the nodes $av$ and $ax$
  (see also Figure~\ref{fig:Weiner_leaf}).
  The character $d$ can be found easily
  (e.g. by using the unique Weiner link from leaf $ax$.)

  If the conditions in Lemma~\ref{lem:type-ii} are satisfied
  for the nodes $av, ax$ and characters $d,c$,
  the conditions in Lemma~\ref{lem:NF_iff} are also satisfied.
  Then, we update $\Phi_{aS}(av) \leftarrow \{(d,c)\}$,
  and add $(av, \phi_{aS}(av)) = (av, 1)$ to the list $\mathcal{L}$.

\subsection{Dealing with Type (iii)-1 node}

Here we show that the number of nodes of Type (iii)-1 is also at most one per iteration.

\begin{lemma}
Let $v$ be the lowest ancestor of leaf $S$ in $\STree(S)$ such that $\W{a}{v}$ is defined (see also Figure~\ref{fig:Weiner_leaf}),
  and let $ax$ be the destination node of $\W{a}{v}$ from $v$ labeled $a \in \Sigma$ in $\STree(S)$ before the update.
Let $c$ be the first character of the edge from $v$ to $x$.
Then $v$ is the only possible node of Type (iii)-1.
Also, $(a,c)$ is the only possible character pair
such that $(a,c) \in \Phi_{S}(v) \setminus \Phi_{aS}(v)$.
\end{lemma}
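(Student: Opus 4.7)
My plan is to prove the two assertions in turn. For the first (uniqueness of $v$), I would begin by observing that any Type (iii)-1 node $w$ admits some $(\alpha,\beta) \in \Phi_S(w)$ with $\occ_{aS}(\alpha w) > \occ_S(\alpha w) = 1$. The only occurrences in $aS$ absent from $S$ are those starting at position $1$, which forces $\alpha = a$ and $w$ to be a proper prefix of $S$, i.e., an ancestor of the leaf for $S$ in $\STree(S)$ with $\W{a}{w}$ defined. Thus $w$ is either $v$ itself or a strict ancestor of $v$ (since $v$ is the lowest such ancestor).

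To rule out strict ancestors, suppose $w$ is a strict ancestor of $v$ and let $\gamma \in \Sigma^+$ satisfy $v = w\gamma$, with $c^\ast = \gamma[1]$. Because $\occ_S(aw) = 1$ while $av = aw\gamma$ also occurs in $S$, the unique occurrence of $aw$ must be the one that extends to $av$, so the character immediately following $aw$ at this position is $c^\ast$. Combined with $\occ_S(aw\beta) = 1$, this forces $\beta = c^\ast$. However, the out-edge from $w$ starting with $c^\ast$ leads to a child $w'$ that either equals $v$ or has $v$ as a descendant, and in either case $w'$ is internal (not a leaf), since $v$ itself is internal in $\STree(S)$. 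This contradicts Lemma~\ref{lem:NF_iff}(3), so $w = v$.

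For the second assertion, by the first part any pair in $\Phi_S(v) \setminus \Phi_{aS}(v)$ has first component $a$, and Lemma~\ref{lem:unique_pair} guarantees at most one $\beta$ with $(a,\beta) \in \Phi_S(v)$. If this pair exists, Lemma~\ref{lem:NF_iff}(2) says $\W{a}{v}$ is soft and points to a leaf $avt$ with $\beta = t[1]$. Since in our setup $\W{a}{v}$ points to $ax$, we get $ax = avt$, hence $x = vt$ and $\beta = t[1] = c$.

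The main obstacle is pinning down $\beta$ in the strict-ancestor argument: one must combine $\occ_S(aw) = 1$ with the existence of the occurrence of $av$ to force $\beta$ to lie on the tree path from $w$ down to $v$, and then exploit the fact that $v$ is an internal node to conclude that this branch cannot terminate at a leaf, thereby contradicting Lemma~\ref{lem:NF_iff}(3).
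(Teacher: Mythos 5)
Your first assertion (uniqueness of the node) is proved correctly and by essentially the same route as the paper: you reduce Type~(iii)-1 candidates to ancestors of the leaf for $S$ (a new occurrence of $\alpha w$ in $aS$ must start at position~1, forcing $\alpha=a$ and $w$ a prefix of $S$ with $\W{a}{w}$ defined), and you kill every strict ancestor $w$ of $v$ by noting that the forced right-extension $\gamma[1]$ points down the tree toward the internal node $v$, so the out-edge of $w$ beginning with $\gamma[1]$ cannot end at a leaf, contradicting condition~(3) of Lemma~\ref{lem:NF_iff}. Your version is in fact slightly more complete than the paper's, which only discusses the ancestors strictly between $v$ and $z$ and leaves implicit that nodes at or above $z$ already fail $\occ_S(aw)=1$.

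The second assertion has a gap as written: ``by the first part any pair in $\Phi_S(v)\setminus\Phi_{aS}(v)$ has first component $a$'' does not follow. The first part only concerns pairs that leave $\Phi_S(v)$ because $\occ_{aS}(\alpha v)$ exceeds $1$; a pair can also leave because $\occ_{aS}(v\beta)$ exceeds $1$ while $\occ_{aS}(\alpha v)$ stays at $1$, and then $\alpha$ need not equal $a$. Concretely, take $S=\mathtt{abcaac\$}$ and prepend $a=\mathtt{a}$: then $v=\mathtt{a}$, $c=\mathtt{c}$, and both $(\mathtt{a},\mathtt{c})$ and $(\mathtt{c},\mathtt{a})$ lie in $\Phi_S(v)\setminus\Phi_{aS}(v)$, the latter because $\occ_{aS}(\mathtt{aa})=2$ while $\occ_{aS}(\mathtt{ca})=1$. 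To be fair, the lemma's literal wording (and the paper's own proof, which only treats the case $\occ_S(av)=1$ and $\occ_{aS}(av)>1$, and argues uniqueness only among pairs whose second component is $c$) has the same scope problem; in the algorithm the extra pair is caught by the Type~(iii)-2 lemma at the node $az$, which in this example coincides with $v$. So your argument is sound under the intended reading --- pairs removed via a Type~(iii)-1 event, i.e.\ those for which $\occ_{aS}(\alpha v)>1$ --- but you should state that restriction explicitly rather than attribute it to ``the first part.'' The remainder of your part two (Lemma~\ref{lem:unique_pair} gives at most one $\beta$ paired with $a$, and condition~(2) of Lemma~\ref{lem:NF_iff} identifies it with $c$ via the soft link to the leaf $ax$) is correct and matches the paper's use of the same two lemmas.
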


\begin{proof}
  Consider the case that $\occ_S(av) = 1$ and $\occ_{aS}(av) > 1$.
  Since $\occ_S(av) = 1$, we have $\occ_S(ax) = 1$ and thus $ax$ is a leaf.
  Since $ax \in \Suffix(S)$, $x \in \Suffix(S)$ and hence $x$ terminates with $\$$.
  Since $v$ is an internal node, $v$ cannot end with $\$$ and thus
  $v$ is a proper prefix of $x$.
  Thus $av$ is on the edge leading to the leaf $ax$ in $\STree(S)$ before the update.
  Since $x$ terminates with $\$$,
  $x$ is also a leaf.
  Let $c$ be the first character of the path label from $v$ toward $x$.
  If $x$ is not a child of $v$,
  then $\occ_{S}(vc) > 1$ and thus $v$ cannot be of Type (iii)-1.
  Consider the case where $x$ is a child of $v$.
  In this case, $\occ_{S}(av) = \occ_{S}(vc) = \occ_{S}(avc) = 1$
  (see also the left diagram of Figure~\ref{fig:Weiner_leaf}),
  and thus we have $(a,c) \in \Phi_S(v)$.
  After the update,
  $av$ becomes a branching node in $\STree(aS)$,
  meaning that $\occ_{aS}(av) > 1$
  (see also the right diagram of Figure~\ref{fig:Weiner_leaf}).
  By Lemma~\ref{lem:NF_iff}, $(a,c) \notin \Phi_{aS}(v)$ holds for node $v$.
  This can also be observed with the corresponding existing substructure in form of Figure~\ref{fig:NF_iff} that disappears in $\STree(aS)$.
  Overall, if the aforementioned conditions are satisfied,
  then $v$ is a node of Type (iii)-1.
  Since $\occ_{S}(avc) = 1$,
  $c$ is the only right-extension of $av$ in $S$.
  It now follows from Lemma~\ref{lem:unique_pair} that
  $a$ is the only pairing character for $c$,
  and hence $(a,c)$ is the only possible character pair
  such that $(a,c) \in \Phi_S(v) \setminus \Phi_{aS}(v)$.
    

  Next, we show that $v$ is the only possible node that is of Type (iii)-1.
  Let $v'$ be any node in the path between $v$ and $z$ exclusive,
  where $z$ is the lowest ancestor of the leaf $S$ that has a hard Weiner link
  $\W{a}{z}$ labeled $a$ in $\STree(S)$ before the update.
  Note that those nodes $v'$ are the only nodes other than $v$
  in the path from the root to $S$,
  which satisfy $\occ_{S}(av') = 1$ and $\occ_{S}(v') > 1$.
  However, since the descendant $v$ of $v'$ is an internal node,
  we have $\occ_{S}(v'c') > 1$ and thus
  $(a, c') \notin \Phi_{aS}(v')$ holds,
  where $c'$ is the only possible paring character for character $a$ with node $v'$ due to Lemma~\ref{lem:unique_pair}.
  Thus $v$ is the only possible node of Type (iii)-1.
\end{proof}

The node $v$ is found by climbing up the path
from the leaf for $S$,
during the update from $\STree(S)$ to $\STree(aS)$.
The node $ax$ is obtained by the Weiner link
$\W{a}{v}$ from $v$,
and the node $x$ is obtained by
the reversed Weiner link (i.e. the suffix link) from $ax$.
The character $c$ is obtained by using
the string depths of the nodes $av$ and $ax$.

Recall that $z$ is the lowest ancestor of the leaf $S$
that has a hard Weiner link $\W{a}{z}$ labeled $a$ in $\STree(S)$
before the update.
If $v \neq z$, then $av$ is on an edge.
We can easily check whether $x$ and $ax$ are leaves
and $x$ is a child of $v$ in $\STree(S)$.
Notice that the first condition $v \neq z$ is critical,
since if $v = z$, then $av$~($= az$) is an internal node in $\STree(S)$
and thus $\occ_{S}(av) > 1$.
Now we have two sub-cases:
\begin{itemize}
\item  If $\phi_{S}(v) = 1$ (namely $\Phi_{S}(v) = \{(a,c)\}$),
then we get $\phi_{aS}(v) = 0$ (namely $\Phi_{aS}(v) = \emptyset$)
after the update,
and thus $(u, \phi_{S}(v))$ is removed from the list $\mathcal{L}$.
\item
If $\phi_{S}(v) > 1$, then
we update $\Phi_{aS}(v) \leftarrow \Phi_{S}(v) \setminus\{(a,c)\}$
and replace $(v, \phi_{S}(v))$ with $(v, \phi_{S}(v)-1) = (v, \phi_{aS}(v))$
in $\mathcal{L}$.
\end{itemize}

\subsection{Dealing with Type (iii)-2 node} \label{sec:type-iii-2}

Here we show that the number of nodes of Type (iii)-2 is also at most one per iteration.

\begin{lemma}
  Let $z$ be the lowest ancestor of leaf $S$ in $\STree(S)$ such that $\W{a}{z}$ is hard  (see also Figure~\ref{fig:Weiner_leaf}).
  Let $v$ be the lowest ancestor of leaf $S$ in $\STree(S)$ such that $\W{a}{v}$ is defined,
  and let $ax$ be the destination node of $\W{a}{v}$ from $v$ labeled $a \in \Sigma$ in $\STree(S)$ before the update.
  Let $e$ be the first character in the path from $az$ to $ax$.
Then $az$ is the only possible node of Type (iii)-2.
Also, $(d,e)$ is the only possible character pair
such that $(d,e) \in \Phi_{S}(az) \setminus \Phi_{aS}(az)$,
  where $d \in \Sigma$ is the unique character which immediately precedes $ax$ in $S$.
\end{lemma}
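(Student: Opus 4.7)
The plan is to reduce the lemma to two claims: (I) $az$ is the only possible Type (iii)-2 node, and (II) $(d,e)$ is the only character pair that can lie in $\Phi_S(az) \setminus \Phi_{aS}(az)$. The common driver of both parts is a locality observation: for any Type (iii)-2 witness $(\alpha,\beta) \in \Phi_S(w)$, the jump $\occ_{aS}(w\beta) > \occ_S(w\beta) = 1$ forces the extra occurrence of $w\beta$ in $aS$ to contain the newly prepended character at position $1$. Hence $w\beta$ is a prefix of $aS$, and (as $w$ is a repeat of positive length) $w = aw'$ for some $w'$, with $w'\beta$ a prefix of $S$.

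For part (I), I would identify $w' = \suflink(w)$ as an explicit ancestor of leaf $S$ in $\STree(S)$, using that $w \in \M(S)$ is an internal node (by Lemma~\ref{lem:NF_MR} and the $\$$-terminator assumption) and that $w'$ is a prefix of $S$; moreover $\W{a}{w'} = w$ is a hard Weiner link. By the minimality of $z$, it remains to rule out the possibility that $w'$ is a strict ancestor of $z$. In that case, because $z$ lies on the root-to-leaf-$S$ path, the path from $w'$ to leaf $S$ passes through $z$, and so the first character along it (namely $\beta$) is also the first character on the path from $w'$ toward $z$. Translating via $\W{a}{\cdot}$, this makes $w\beta = aw'\beta$ a prefix of $az$, giving $\occ_S(w\beta) \geq \occ_S(az) \geq 2$ and contradicting $\occ_S(w\beta) = 1$. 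Hence $w' = z$ and $w = az$.

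For part (II), setting $w = az$ in the prefix-of-$aS$ argument pins $\beta = S[|z|+1]$. Since $v$ is a descendant of $z$ that is both an ancestor of leaf $S$ and a prefix of $x$ (because $av$ is a prefix of $ax$), the path from $z$ toward leaf $S$ and the path from $z$ toward $x$ share their first edge; this common first character is precisely $e$. Thus $\beta = e$ is forced. By Lemma~\ref{lem:unique_pair}, at most one $\alpha$ can accompany $\beta = e$ in $\Phi_S(az)$. To pin $\alpha = d$, I would use that $az \cdot e$ is a prefix of $ax$, so if $(\alpha, e) \in \Phi_S(az)$ then the unique occurrence of $az \cdot e$ in $S$ must coincide with the (unique) occurrence of $ax$ in $S$, whose immediate predecessor in $S$ is $d$ by definition.

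The main obstacle I expect is the path-coincidence step inside part (I): arguing carefully that when $w'$ is a strict ancestor of $z$, the first character from $w'$ toward leaf $S$ agrees with the first character from $w'$ toward $z$. This rests on $z$ lying on the root-to-leaf-$S$ path, so that the path from $w'$ to leaf $S$ is forced to route through $z$ and shares its initial step. Once this is in place, the occurrence-count bound $\occ_S(w\beta) \geq \occ_S(az) \geq 2$ is immediate, and the remainder of the argument parallels the treatment of Types (i), (ii), and (iii)-1.
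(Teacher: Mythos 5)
Your proposal is correct and takes essentially the same route as the paper: the key observation in both is that the gained occurrence of $w\beta$ (from $\occ_S(w\beta)=1$ to $\occ_{aS}(w\beta)>1$) must be the prefix occurrence of $aS$, which localizes the candidate to the edge from $az$ to $ax$ and pins $\beta = e$ and then $\alpha = d$ via Lemma~\ref{lem:unique_pair}. Your uniqueness argument---running through $\suflink(w)$ being an ancestor of leaf $S$ carrying a hard $a$-Weiner link and invoking the minimality of $z$---is actually a more explicit rendering of the paper's terser appeal to the fact that Weiner's algorithm inserts only the single new internal node $av$, so no gap remains.
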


\begin{proof}
  Recall that $az$ is the parent of $ax$ in $\STree(S)$ before the update.  
  Consider the case that $\occ_S(aze) = 1$, where $ax$ is a leaf in $\STree(S)$.
  Also, consider the case that $\occ_{S}(daz) = 1$,
  which implies $\occ_{S}(daze) = 1$.
  Therefore, $(d,e) \in \Phi_{S}(az)$ by Lemma~\ref{lem:NF_iff}.
  Consider the case where $av$ becomes a branching node in $\STree(aS)$ after the update.
  Then, $av$ becomes a child of $az$, meaning that $\occ_{aS}(aze) > 1$.
  Thus $(d, e) \notin \Phi_{aS}(az)$ holds for $az$.
  Overall, if the aforementioned conditions are satisfied,
  then $az$ is a node of Type (iii)-2.
  This can also be observed with the corresponding existing substructure in form of Figure~\ref{fig:NF_iff} that disappears in $\STree(aS)$.

  Since Weiner's algorithm inserts at most one new internal node,
  $az \in \M(S)$ and $e \in \Sigma$ are the only maximal repeat and character
  that can satisfy $\occ_{S}(aze) = 1$ and $\occ_{aS}(aze) > 1$.
  By Lemma~\ref{lem:unique_pair},
  $(d,e)$ is the the only possible character pair
such that $(d,e) \in \Phi_{S}(az) \setminus \Phi_{aS}(az)$.
\end{proof}

The node $az$ is found via the hard Weiner link $\W{a}{z}$ in $\STree(S)$
from $z$,
during the update from $\STree(S)$ to $\STree(aS)$.
The character $e$ is found
by using the string depths of the nodes $az$ and $ax$.
We can find the character $d$
and can check whether $ax$ is a leaf or not
analogously to the case of Type (iii)-1 nodes.
The rest is also analogous to Type (iii)-1 nodes.

\subsection{Complexities}

\begin{theorem}
  For a string $S$ of length $n$ and a character $a \in \Sigma$,
  Weiner's algorithm solves Problem~\ref{prob:online_NF_right_to_left}
  of online NF computation in $O(\log \sigma)$ amortized time per character
  using $O(n)$ space.
\end{theorem}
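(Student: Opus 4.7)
The plan is to assemble the complexity bound from three already-established ingredients: (a) the amortized cost of Weiner's update (Lemma~\ref{lem:Weiner_time}), (b) the fact that each of the four node categories Type~(i), Type~(ii), Type~(iii)-1, Type~(iii)-2 admits at most one candidate per iteration, already located by quantities computed during the update (the parent $u$ of leaf $S$, the parent $av$ of the new leaf, the soft-Weiner-link ancestor $v$, and the hard-Weiner-link ancestor's image $az$), and (c) the linear bounds $|\PNF(S)| \le n-1$ (from Lemma~\ref{lem:NF_MR}) and the $O(n)$ bound on the total number of soft/hard Weiner links (Lemma~\ref{lem:DAWG_linear}).

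First I would handle the space. The suffix tree $\STree(S)$ uses $O(n)$ words, the collection of hard and soft Weiner links fits in $O(n)$ by Lemma~\ref{lem:DAWG_linear}, and the list $\mathcal{L}$ storing $\PNF(S)$ has size at most $n-1$. Storing each pair $(w,\phi_S(w))$ at the corresponding suffix-tree node keeps the random-access cost to a constant per update, and permits $O(1)$-time insertions and deletions in $\mathcal{L}$ provided we keep a back-pointer from each node to its list entry.

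Next I would argue the per-character time. Lemma~\ref{lem:Weiner_time} already gives an amortized $O(\log \sigma)$ bound for updating the suffix tree and its Weiner links, where the $\log \sigma$ factor is charged to looking up Weiner links by the new character $a$. The key observation is that all four NF-candidate nodes $u$, $av$, $v$, $az$ are \emph{exactly} the distinguished nodes already visited (and, for $av$ and $az$, already computed by Weiner links) during this update; no extra traversal is needed. For each one I would verify the hypotheses of the corresponding lemma (Lemma~\ref{lem:type_i}, Lemma~\ref{lem:type-ii}, and the Type~(iii) lemmas) in $O(1)$ time using string depths and a constant number of Weiner-link or child-edge queries, and then perform the corresponding $O(1)$-time update of $\mathcal{L}$ (insert, replace, or remove) using the back-pointer. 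Summing over the $n$ updates gives $O(n \log \sigma)$ total time and $O(\log \sigma)$ amortized per character.

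The one subtlety, and the only place where I have to be careful, is that the correctness of the ``at most one candidate'' claims relies on the earlier lemmas having exhausted all possible sources of changes to $\PNF$. I would therefore close the proof with a brief appeal to Lemma~\ref{lem:NF_iff}: any change to some $\Phi_{\cdot}(w)$ between $S$ and $aS$ must be witnessed by the appearance or disappearance of the substructure in Figure~\ref{fig:NF_iff} at $w$, and Weiner's update modifies exactly the edges/nodes incident to the path visited during the update, so the four cases exhaust all such modifications. This matches Case~(i)--Case~(iii) of Section~\ref{sec:online_NF_def}, finishing the proof.
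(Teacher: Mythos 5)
Your proposal is correct and follows essentially the same route as the paper's own proof: cite Lemma~\ref{lem:Weiner_time} for the amortized cost of locating $u$, $v$, $z$ (and hence $av$, $az$), observe that the four per-iteration candidate nodes are exactly these already-visited nodes and each check and list update costs $O(1)$ beyond the $O(\log\sigma)$ Weiner-link accesses, and bound the space by Lemma~\ref{lem:DAWG_linear} together with $|\PNF(S)|\le n-1$. Your added remarks on back-pointers into $\mathcal{L}$ and the closing appeal to Lemma~\ref{lem:NF_iff} for exhaustiveness merely make explicit what the paper compresses into ``our previous discussions.''
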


\begin{proof}
  It is known that Weiner's algorithm finds
  the nodes $u$, $v$, and $z$ in $O(1)$ amortized time for
  a new character $a$ that is prepended to $S$ (Lemma~\ref{lem:Weiner_time}).
  It follows from our previous discussions that
  given the nodes $u$, $v$, and $z$,
  we can compute $\PNF(aS) \triangle \PNF(S)$
  in $O(\log \sigma)$ amortized time,
  where the $\log \sigma$ factor is the cost to access the Weiner links
  from nodes $v$ and $z$.

  By Lemma~\ref{lem:DAWG_linear} the number of soft and hard Weiner links is $O(n)$.
  Since the suffix tree takes $O(n)$ space,
  the total space requirement is $O(n)$.
\end{proof}

Since all positive values of NFs are stored and maintained
in the respective nodes of the online suffix tree,
we obtain the following:

\begin{corollary}
  There exists an online data structure that can be maintained in
  $O(n \log \sigma)$ total time with $O(n)$ space, and can answer
  \textsc{Single-NF} queries in $O(m \log \sigma)$ time,
  where $m$ is the query pattern length.
\end{corollary}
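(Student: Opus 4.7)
The plan is straightforward: combine the just-proved Theorem with standard top-down pattern matching on the suffix tree. First I would apply the Theorem inductively over the $n$ characters of $S$, prepended one at a time. Each update costs $O(\log \sigma)$ amortized time and maintains $\STree(S_i)$ augmented with its soft and hard Weiner links, together with the invariant that every internal node $w$ of positive NF stores the pair $(w, \phi_{S_i}(w))$. Summing over $i = 1, \dots, n$, the total construction time is $O(n \log \sigma)$. For the space bound, $\STree(S)$ itself is $O(n)$, the Weiner links are $O(n)$ by Lemma~\ref{lem:DAWG_linear}, and by Lemma~\ref{lem:NF_MR} (or alternatively Lemma~\ref{lem:total_NF_upperbound}) the positive NF annotations fit within $O(n)$ additional space, for $O(n)$ overall.

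To answer a \textsc{Single-NF} query on a pattern $P$ of length $m$, I would perform the standard top-down traversal of $\STree(S)$. At each explicit node, selecting the appropriate out-edge requires an $O(\log \sigma)$-time lookup among the children (general ordered alphabet), after which the label of that edge is scanned character by character against $P$. Either the traversal fails, in which case $P \notin \Substr(S)$ and therefore $\phi_S(P) = 0$, or it terminates at an explicit or implicit locus corresponding to $P$. Amortizing the edge-label scans against the $O(\log \sigma)$-cost branch selections, the descent costs $O(m \log \sigma)$ in total.

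Finally, I would read off the NF from the locus. By Lemma~\ref{lem:NF_MR}, only internal branching nodes can carry positive NF, so if the locus lies strictly inside an edge or at a leaf, I return $0$; if it coincides with an internal node, I return the stored value, defaulting to $0$ when no value is stored. This last step is $O(1)$, so the overall query cost remains $O(m \log \sigma)$. I expect no real obstacle: the only point that may warrant a remark is that whenever Weiner's algorithm splits an edge to create a new internal node $av$, its NF annotation is initialized correctly at the moment of insertion by the Type~(ii) analysis, so the node-annotation invariant persists through every update and the query procedure remains correct throughout the online construction.
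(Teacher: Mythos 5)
Your proposal is correct and matches the paper's intended argument: the corollary is an immediate consequence of the theorem (which already maintains the augmented suffix tree with stored positive NF values in $O(n\log\sigma)$ total time and $O(n)$ space), combined with standard top-down pattern matching to locate the locus of $P$ in $O(m\log\sigma)$ time and reading off the stored value, returning $0$ for implicit or leaf loci as justified by Lemma~\ref{lem:NF_MR}. No gaps.
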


By maintaining $\mathcal{L}$ as a doubly-linked list
such that each element is associated to the corresponding node in the suffix tree, we obtain the following:

\begin{corollary}
  There exists an online data structure that can be maintained in
  $O(n \log \sigma)$ total time with $O(n)$ space,
  and can report all answers to the \textsc{All-NF} problem in output optimal
  $O(|\PNF(S)|)$ time.
\end{corollary}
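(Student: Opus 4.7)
The plan is to build directly on the preceding theorem and on the explicit update scheme described in Sections~\ref{sec:type-i}--\ref{sec:type-iii-2}. The preceding theorem already establishes that Weiner's algorithm, augmented with the Type~(i)/(ii)/(iii)-1/(iii)-2 verification steps, computes $\PNF(aS) \triangle \PNF(S)$ in $O(\log \sigma)$ amortized time per new character and uses $O(n)$ space in total. So the only remaining content is to arrange the bookkeeping so that (a) the data structure is ready to be queried at any prefix of the online construction and (b) a full \textsc{All-NF} report costs exactly $O(|\PNF(S)|)$ time, independently of $\sigma$, $n$, or the current suffix tree size.

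First, I would maintain $\PNF(S)$ as a doubly-linked list $\mathcal{L}$ of pairs $(w, \phi_S(w))$, together with a cross-pointer at each suffix tree node $w$ with $\phi_S(w) \geq 1$ pointing to its entry in $\mathcal{L}$ (and $\mathrm{nil}$ otherwise). This way, Case~(i) and Case~(ii) perform insertions at the head of $\mathcal{L}$ in $O(1)$, updating the cross-pointer at the corresponding node; Case~(iii)-1 and Case~(iii)-2 either modify the count in place via the cross-pointer or splice the entry out of $\mathcal{L}$ in $O(1)$ using the doubly-linked structure. Since each of the four types contributes at most one candidate node per iteration and each candidate is located during Weiner's update at nodes $u$, $v$, $z$ that are already visited, these list operations add only $O(1)$ per character on top of the $O(\log \sigma)$ amortized Weiner cost. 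By induction on the number of characters prepended, $\mathcal{L}$ equals $\PNF(S)$ after every iteration.

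Given this invariant, the \textsc{All-NF} report is implemented by scanning $\mathcal{L}$ once, which outputs each pair exactly once, yielding $O(|\PNF(S)|)$ reporting time. The total construction time is the online update time $\sum_{i} O(\log \sigma) = O(n \log \sigma)$ from the preceding theorem, plus $O(1)$ amortized per character for list maintenance, still $O(n \log \sigma)$ overall. The space bound is also inherited: the suffix tree and its Weiner links take $O(n)$ space by Lemma~\ref{lem:DAWG_linear}, the list $\mathcal{L}$ has $|\PNF(S)| \leq n-1$ entries by Lemma~\ref{lem:NF_MR} and the bound $|\M(S)| \leq n-1$, and the cross-pointers add only $O(n)$ more.

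The only subtle point, and what I would treat as the main obstacle in writing the proof, is the $O(1)$-time deletion required by Cases~(iii)-1 and~(iii)-2 when a NF value drops to zero. Locating the affected node $v$ or $az$ is already handled by Weiner's algorithm at no extra asymptotic cost, but retrieving its position in $\mathcal{L}$ without an auxiliary search has to be argued explicitly; the doubly-linked list plus the cross-pointer at the node is precisely the mechanism that makes this deletion constant-time, and I would spell out that this cross-pointer is set at the same moment the node is added to $\mathcal{L}$ in Case~(i) or Case~(ii), and cleared upon removal. With this bookkeeping in place, the claimed bounds follow immediately.
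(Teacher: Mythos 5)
Your proposal is correct and matches the paper's intended argument: the corollary follows directly from the preceding theorem together with the list maintenance of $\mathcal{L}$ already described in the Type~(i)--(iii) update steps, and your doubly-linked list with node-to-entry cross-pointers is exactly the standard bookkeeping the paper leaves implicit. The extra care you take in justifying $O(1)$-time deletion is a reasonable elaboration but not a departure from the paper's route.
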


Guo et al.~\cite{GuoEWZ24} also considered a version of the \textsc{All-NF} problem, named \textsc{All-NF-Extract},
that requires us to compute the product $|u| \cdot \phi_S(u)$ for all pairs $(u, \phi_S(u)) \in \PNF(S)$.
Guo et al.~\cite{GuoEWZ24} proposed an offline algorithm that works in $O(n \log \delta)$ time for \textsc{All-NF-Extract} for integer alphabets,
where $\delta$ denotes the \emph{substring complexity}~\cite{KociumakaNP23} of the input string.
Our online algorithm can readily be extended to the \textsc{All-NF-Extract} problem:

\begin{corollary}
  There exists an online data structure that can be maintained in
  $O(n \log \sigma)$ total time with $O(n)$ space,
  and can report all answers to the \textsc{All-NF-Extract} problem in
  output optimal $O(|\PNF(S)|)$ time.
\end{corollary}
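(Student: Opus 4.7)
The plan is to piggyback on the online data structure of the preceding corollary, which already maintains $\mathcal{L} = \PNF(S)$ inside the online suffix tree in $O(n \log \sigma)$ total time and $O(n)$ space. The only extra functionality required for \textsc{All-NF-Extract} is the ability to obtain $|u|$ in $O(1)$ time for every node $u$ that appears as an entry of $\mathcal{L}$, so that the product $|u| \cdot \phi_S(u)$ can be computed and reported on demand.

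First, I would augment the online suffix tree so that every explicit node stores its string depth. This information is essentially free inside Weiner's algorithm: a new leaf created when prepending the $k$-th character has string depth $k$, and when an edge is split to form a new internal node $av$, the depth of $av$ is obtained in $O(1)$ from the depth of $az$ plus the length of the prefix edge leading from $az$ to $av$, which is read directly from the edge label's index pair $\langle i, j\rangle$. Thus this augmentation adds only $O(1)$ amortized overhead per iteration, so the $O(n \log \sigma)$ construction time and $O(n)$ space bounds of the preceding corollary are preserved.

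Second, to answer an \textsc{All-NF-Extract} query, I would scan the list $\mathcal{L}$ once and, for each stored pair $(u, \phi_S(u))$, output $|u| \cdot \phi_S(u)$ using the cached string depth of $u$ and a single multiplication. Correctness is immediate from the previous corollary, which ensures that $\mathcal{L}$ is at all times an exact representation of $\PNF(S)$. Since $\mathcal{L}$ contains exactly $|\PNF(S)|$ entries and each one is processed in $O(1)$ time, the reporting time is $O(|\PNF(S)|)$, matching the output size up to a constant factor and hence output-optimal.

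No genuine obstacle is expected: the result reduces to the preceding corollary plus the elementary observation that Weiner's construction can track string depths at every explicit node with constant-time bookkeeping per update, thanks to the $\langle i, j\rangle$-encoding of edge labels described in the preliminaries.
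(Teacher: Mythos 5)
Your proposal is correct and matches the paper's intended argument: the paper states this corollary as an immediate extension of the \textsc{All-NF} result, relying precisely on the fact that string depths are already maintained at explicit nodes during Weiner's construction (they are used throughout the update procedure), so each product $|u|\cdot\phi_S(u)$ is reported in $O(1)$ time while scanning $\mathcal{L}$. No substantive difference from the paper's approach.
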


\section*{Acknowledgements}
This work is supported by JSPS KAKENHI Grant Numbers
JP23K24808 and JP23K18466.
The author thanks Takuya Mieno for fruitful discussions.

\bibliographystyle{abbrv}
\bibliography{ref}

\begin{thebibliography}{10}

\bibitem{BenderCDFZ02}
M.~A. Bender, R.~Cole, E.~D. Demaine, M.~Farach{-}Colton, and J.~Zito.
\newblock Two simplified algorithms for maintaining order in a list.
\newblock In {\em {ESA} 2002}, volume 2461 of {\em Lecture Notes in Computer
  Science}, pages 152--164, 2002.

\bibitem{Blumer85}
A.~Blumer, J.~Blumer, D.~Haussler, A.~Ehrenfeucht, M.~T. Chen, and J.~Seiferas.
\newblock The smallest automaton recognizing the subwords of a text.
\newblock {\em Theoretical Comput. Sci.}, 40:31--55, 1985.

\bibitem{Blumer87}
A.~Blumer, J.~Blumer, D.~Haussler, R.~Mcconnell, and A.~Ehrenfeucht.
\newblock Complete inverted files for efficient text retrieval and analysis.
\newblock {\em J. ACM}, 34(3):578--595, 1987.

\bibitem{BreslauerI12}
D.~Breslauer and G.~F. Italiano.
\newblock On suffix extensions in suffix trees.
\newblock {\em Theor. Comput. Sci.}, 457:27--34, 2012.

\bibitem{BurrowsWheeler94}
M.~Burrows and D.~J. Wheeler.
\newblock A block-sorting lossless data compression algorithm.
\newblock Technical report, DIGITAL System Research Center, 1994.

\bibitem{DietzS87}
P.~F. Dietz and D.~D. Sleator.
\newblock Two algorithms for maintaining order in a list.
\newblock In {\em STOC 1987}, pages 365--372, 1987.

\bibitem{Farach-ColtonFM00}
M.~Farach{-}Colton, P.~Ferragina, and S.~Muthukrishnan.
\newblock On the sorting-complexity of suffix tree construction.
\newblock {\em J. {ACM}}, 47(6):987--1011, 2000.

\bibitem{GuoEWZ24}
P.~Guo, P.~Eades, A.~Wirth, and J.~Zobel.
\newblock Exploiting new properties of string net frequency for efficient
  computation.
\newblock In {\em CPM 2024}, volume 296 of {\em LIPIcs}, pages 16:1--16:16,
  2024.

\bibitem{GuoUWZ24}
P.~Guo, S.~W. Umboh, A.~Wirth, and J.~Zobel.
\newblock Online computation of string net frequency.
\newblock In {\em SPIRE 2024}, volume 14899 of {\em LNCS}, pages 159--173,
  2024.

\bibitem{KociumakaNP23}
T.~Kociumaka, G.~Navarro, and N.~Prezza.
\newblock Toward a definitive compressibility measure for repetitive sequences.
\newblock {\em {IEEE} Trans. Inf. Theory}, 69(4):2074--2092, 2023.

\bibitem{LinY01}
Y.~Lin and M.~Yu.
\newblock Extracting {Chinese} frequent strings without dictionary from a
  {Chinese} corpus, its applications.
\newblock {\em J. Inf. Sci. Eng.}, 17(5):805--824, 2001.

\bibitem{LinY04}
Y.~Lin and M.~Yu.
\newblock The properties and further applications of {Chinese} frequent
  strings.
\newblock {\em Int. J. Comput. Linguistics Chin. Lang. Process.}, 9(1), 2004.

\bibitem{manber93:_suffix}
U.~Manber and G.~Myers.
\newblock Suffix arrays: A new method for on-line string searches.
\newblock {\em SIAM J.~Computing}, 22(5):935--948, 1993.

\bibitem{MienoI25}
T.~Mieno and S.~Inenaga.
\newblock Space-efficient online computation of string net occurrences.
\newblock In {\em {CPM} 2025}, volume 331 of {\em LIPIcs}, pages 23:1--23:13,
  2025.

\bibitem{OhlebuschBO24}
E.~Ohlebusch, T.~B{\"{u}}chler, and J.~Olbrich.
\newblock Faster computation of {Chinese} frequent strings and their net
  frequencies.
\newblock In {\em {SPIRE} 2024}, volume 14899 of {\em Lecture Notes in Computer
  Science}, pages 249--256, 2024.

\bibitem{Tsakalidis84}
A.~K. Tsakalidis.
\newblock Maintaining order in a generalized linked list.
\newblock {\em Acta Informatica}, 21:101--112, 1984.

\bibitem{Ukkonen95}
E.~Ukkonen.
\newblock On-line construction of suffix trees.
\newblock {\em Algorithmica}, 14(3):249--260, 1995.

\bibitem{Weiner}
P.~Weiner.
\newblock Linear pattern-matching algorithms.
\newblock In {\em Proc. of 14th IEEE Ann. Symp. on Switching and Automata
  Theory}, pages 1--11, 1973.

\bibitem{westbrook92:_fast_increm_planar_testin}
J.~Westbrook.
\newblock Fast incremental planarity testing.
\newblock In {\em ICALP 1992}, pages 342--353, 1992.

\end{thebibliography}

\end{document}